\newtheorem{theorem}{Theorem}
\newtheorem{assumption}{Assumption}
\newtheorem{remark}{Remark}
\newtheorem{proposition}{Proposition}
\begin{document}

\begin{frontmatter}
\title{A novel multi-exposure-to-multi-mediator mediation model for imaging genetic study of brain disorders}

\author[1,3]{Neng Wang}
\author[2]{Eric V. Slud\,\corref{cor1}\orcidlink{0000-0002-0647-9193}}
\author[3]{Tianzhou Ma\,\corref{cor2}\orcidlink{0000-0003-3605-0811}}

\cortext[cor1]{Corresponding author. Email: slud@umd.edu}
\cortext[cor2]{Corresponding author. Email: tma0929@umd.edu}

\address[1]{Department of Mathematics, University of Maryland, College Park, MD 20742, USA}
\address[2]{Department of Mathematics, University of Maryland, College Park, MD 20742, USA}
\address[3]{Department of Epidemiology and Biostatistics, University of Maryland, College Park, MD 20742, USA}

\begin{abstract}
 Common psychiatric and brain disorders are highly heritable and affected by a number of genetic risk factors, yet the mechanism by which these genetic factors contribute to the disorders through alterations in brain structure and function remain poorly understood. Contemporary imaging genetic studies integrate genetic and neuroimaging data to investigate how genetic variation contributes to brain disorders via intermediate neuroimaging endophenotypes. However, the large number of potential exposures (genes) and mediators (neuroimaging features) pose new challenges to the traditional mediation analysis. In this paper, we propose a novel multi-exposure-to-multi-mediator mediation model that integrates genetic, neuroimaging and phenotypic data to investigate the ``gene-neuroimaging-brain disorder'' mediation pathway. Our method jointly reduces the dimensions of exposures and mediators into low-dimensional aggregators where the mediation effect is maximized. We further introduce sparsity into the loadings to improve the interpretability. To target the bi-convex optimization problem, we implement an efficient alternating direction method of multipliers algorithm with block coordinate updates. We provide theoretical guarantees for the convergence of our algorithm and establish the asymptotic properties of the resulting estimators. Through extensive simulations, we demonstrate that our method outperforms other competing methods in recovering true loadings and true mediation proportions across a wide range of signal strengths, noise levels, and correlation structures. We further illustrate the utility of the method through a mediation analysis that integrates genetic, brain functional connectivity and smoking behavior data from UK Biobank, and identifies critical genes that impact nicotine dependence via changing the functional connectivity in specific brain regions.

\end{abstract}
\end{frontmatter}

\section{Introduction}

Common psychiatric and brain disorders are highly heritable and polygenic, influenced by numerous genetic risk factors \citep{brainstorm2018analysis,kaufmann2019common}. Previous genome-wide association studies (GWAS) and transcriptome-wide association studies (TWAS) have identified critical genetic risk factors at both SNP and gene levels and improved our understanding of the genetic influence on these disorders \citep{brainstorm2018analysis,horwitz2019decade,ye2021meta,hatoum2023multivariate}. Yet, how these genes impact the disorders via altering the brain structure or function are not well understood. Studies using the magnetic resonance imaging (MRI) data have found heritable neuroimaging features including both structural and functional signatures of these brain disorders \citep{elliott2018genome}, laying out the basis of imaging genetic studies of brain disorders that examine how genetic variation influences brain disorders through intermediate neuroimaging endophenotypes \citep{bi2017genome}.

Mediation analysis is a statistical framework used to delineate how exposures affect outcomes through intermediate variables known as mediators, providing key insights into the causal mechanism \citep{mackinnon2000equivalence}. Traditional mediation analysis deals with one single exposure and one single mediator at a time. With the advancement of technology, contemporary studies frequently involve a large number of exposures and mediators, e.g. genetic and neuroimaging features in imaging genetic studies (Fig \ref{fig:gene_mri_disorder}), which present new methodological challenges that classical mediation approaches are ill-equipped to address. A number of methods have been developed to address these challenges in mediation analysis. One major class of methods considers performing screening or regularization procedures to reduce the number of mediators and/or exposures first before the mediation analysis \citep{zhang2016estimating,zhao2022pathway,perera2022hima2,zhang2022high}. While effective for large mediator sets, these methods usually separate the mediator selection from mediation effect estimation, potentially overlooking joint mediator effects and introducing false positives. Another class of methods applies dimension reduction on high-dimensional mediators, including principal component analysis and its sparse variants \citep{huang2016hypothesis,gao2019testing,zhao2020multimodal}. However, the dimension reduction step is independent of mediation analysis thus the mapped low-dimensional space is not directly related to the mediation pathway. \cite{chen2018high} considered seeking linear mediator combinations maximizing mediated likelihood within structural equation models using the Directions of Mediation (DMs) approach. \cite{lee2024new} proposed a multi-mediator method that mapped the high-dimensional mediators to a space that maximizes the mediation proportion, and \cite{dai2024integrative} introduced the Mediation Multiple Pathways (MMP) framework, which aggregates evidence across multiple linear mediation directions by enforcing orthogonality and controlling false discovery in both marginal and conditional indirect effects. Few methods have effectively balanced between model flexibility, model interpretability and computational efficiency in these new mediation analysis problems.

 In imaging genetic study of brain disorders, both genetic risk factors and neuroimaging features are potentially high-dimensional. Genes are correlated and working together to carry similar functions \citep{sullivan2018psychiatric}. At the same time, the neuroimaging features (e.g. MRI features from different brain regions) are inherently spatially-correlated \citep{smith2015positive} (Fig \ref{fig:gene_mri_disorder}). We need to consider the simultaneous selection of exposures and mediators from a large number of correlated candidates. In addition, methods creating latent mediator composites often sacrifice interpretability, complicating the translation of findings into actionable biological insights. Lastly, heavily penalized models may exhibit instability, with sparse solutions highly sensitive to data perturbations and tuning parameter choices, thus limiting reproducibility \citep{huang2016hypothesis,zhao2020multimodal}.

To overcome these challenges, in this paper, we introduced a novel multi-exposure-to-multi-mediator mediation model that integrates genetic, neuroimaging and phenotypic data to investigate the ``gene-neuroimaging-brain disorder'' pathway (Figure~\ref{fig:gene_mri_disorder}). In our model, we jointly reduced the dimensions of exposures and mediators into a lower-dimensional space that captures the strongest mediating effect of brain imaging in the mediation pathway. In addition, we imposed structured sparsity constraints on loadings to select a small subset of exposures and mediators, maintaining interpretability and stability. To efficiently estimate the proposed model, we developed a tailored optimization algorithm based on alternating direction method of multipliers (ADMM). We also theoretically showed the convergence of our algorithm and asymptotic properties of our estimates. Our method consistently outperformed competing methods in simulation studies, showing improved power, enhanced mediator selection stability, and better interpretability. Additionally, we illustrated the utility of our proposed method through an integrative mediation analysis of gene expression, brain functional connectivity, and smoking behavior data, demonstrating how critical genes (e.g. \textit{CHRNA5-CHRNA3-CHRNB4} gene cluster) impact nicotine dependence via changing the brain functional connectivity between middle frontal gyrus, precentral gyrus, thalamus and superior parietal lobule regions.

The remainder of this paper is structured as follows. Section 2 details our methodological framework, estimation procedure and theoretical guarantee. Section 3 presents extensive simulations comparing our method against existing approaches. Section 4 applies our method to a real data example investigating the role of brain functional connectivity features that mediate the genetic effect on nicotine addiction in UK Biobank. Section 5 concludes with a discussion of the findings, potential limitations, and avenues for future research.

\begin{figure}[htbp]
\centering
\includegraphics[scale=0.5]{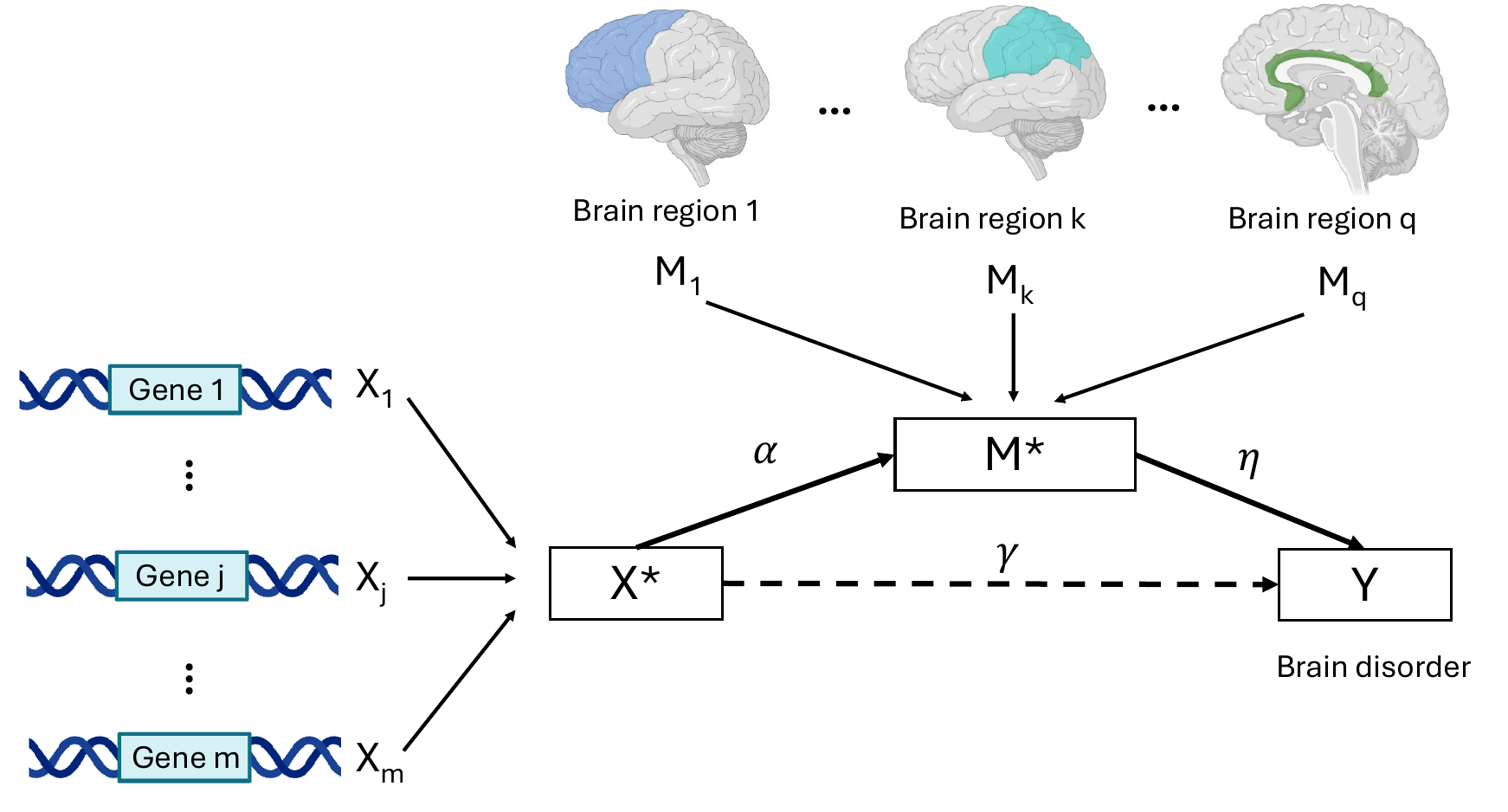}
\caption{The proposed multi-exposure-to-multi-mediator framework to investigate how multiple genetic risk factors (e.g. gene expression from TWAS) impact brain disorders via changing multiple neuroimaging features (e.g. MRI features from different brain regions). In this model, genes act as exposures, neuroimaging features act as mediators, and brain disorder is the outcome.}
\label{fig:gene_mri_disorder}
\end{figure}

\section{Method} 
\subsection{A novel multi-exposure-to-multi-mediator model} \label{model}
{We consider a multi-exposure-to-multi-mediator model to investigate the gene-neuroimaging-brain disorder mediation pathway (Fig \ref{fig:gene_mri_disorder}). We are specifically interested in knowing how a group of functionally related genetic risk factors impact the brain disorder via changing the function and structure of the brain. Denote by $\mathbf{X}=(X_1,\ldots,X_m)$ the $n\times m$ matrix of exposures of $m$ related genetic risk factors of a brain disorder for $n$ subjects. These genetic risk factors could refer to the genotypes of a set of related SNPs (e.g., in the same linkage disequilibrium (LD) region) associated with the outcome from a GWAS, or the predicted expression levels (also known as genetically regulated expression (GReX) \citep{gamazon2015gene}) of a set of genes associated with the outcome from a TWAS. Denote by $\mathbf{M} = (M_1, \ldots, M_q)$ the $n \times q$ matrix of $q$ neuroimaging mediators for $n$ subjects, where each $M_k$ ($1\le k\le q$) may represent an imaging-derived phenotype (IDP) such as functional connectivity or cortical thickness measured across $q$ brain regions from structural or functional MRI. Denote by $\mathbf{Y} = (Y_1, \ldots, Y_n)^\top$ the vector of phenotypic outcome. Here we assume a linear mediation model so $Y$ needs to be a quantitative measure (e.g. cigarettes per day to capture nicotine dependence in real data example). When the outcome is categorical, it is readily extensible to a logistic or multinomial outcome model. In real data application, we adjust for known covariates (e.g., age, sex, and ancestry principal components) by regressing them out from each column of $\mathbf{X}$, $\mathbf{M}$, and $\mathbf{Y}$, i.e. each is replaced by its residuals from a linear model that includes the covariates. While this adjustment is standard practice, it introduces dependencies among the entries of $\mathbf{Y}$ that must be acknowledged; the residualized outcomes are not strictly independent. Nonetheless, this preprocessing step ensures that subsequent mediation analysis targets variation in $\mathbf{Y}$, $\mathbf{X}$, and $\mathbf{M}$ that is not explained by nuisance variables.

Given the large sets of correlated exposures and correlated mediators, we adopt a reduced-rank modeling approach by assuming that the mediation effect is primarily conveyed through a few dominant directions in the exposure and mediator spaces. This assumption is supported by the empirical success of dimension reduction techniques such as principal component analysis (PCA) and canonical correlation analysis (CCA), which often reveal low-dimensional structures capturing most of the variability or association. Specifically, we posit that the mediation effect is concentrated along one principal direction in the exposure space and one in the mediator space. These directions serve as linear combinations that summarize the primary pathway through which the exposures influence the outcome via the mediators \citep{zhao2020multimodal,zhou2020sparse}. Specifically, we define low‐dimensional aggregators:
\begin{equation}
  \mathbf{X}^* = \mathbf X\,\mathbf a,\qquad
  \mathbf{M}^* = \mathbf M\,\mathbf b,
\end{equation}
{where $\mathbf a\in\mathbb R^m$ and $\mathbf b\in\mathbb R^q$ are weight vectors to be estimated. In this model, we consider that only one latent variable (e.g. one-dimensional aggregator) exists for exposures and mediators (i.e. $\mathbf{X}^* = (X^*_1,\ldots, X^*_n)$ and $\mathbf{M}^* = (M^*_1,\ldots, M^*_n)$ for $n$ subjects), respectively. In a real application, there might exist more than one latent variable and we may consider extending our model to a scenario with multiple latent variables. As we will see next, the general idea of capturing the primary mediation pathway still applies. To ensure identifiability, these aggregators are subject to the unit-norm constraints}
$\|\mathbf X\,\mathbf a\|_2=1$ and $\|\mathbf M\,\mathbf b\|_2=1$,
which fix scale, removing ambiguity due to rescaling.

We assume a linear mediation structure on the aggregated variables $\mathbf{X}^*$ and $\mathbf{M}^* $, following is a standard formulation in the mediation literature \citep{baron1986moderator, mackinnon2007mediation, imai2010identification}:
\begin{align}
\mathbf{M}^* &= \alpha\,\mathbf{X}^* + \boldsymbol{\varepsilon}_M,
\label{eq:med1}\\
\mathbf{Y}   &= \gamma\,\mathbf{X}^* + \eta\,\mathbf{M}^* + \boldsymbol{\varepsilon}_Y,
\label{eq:med2}
\end{align}
where $\boldsymbol{\varepsilon}_M \sim {\cal N}(0,\sigma_M^2 \, I_{n \times n})$, $\boldsymbol{\varepsilon}_Y \sim {\cal N}(0,\sigma_Y^2 \, I_{n \times n})$, and \(\theta=(\mathbf a,\mathbf b,\alpha,\gamma,\eta,\sigma_M^2,\sigma_Y^2)\) is the set of unknown parameters. The error terms are assumed to be independent with exposures and mediators, i.e. $\varepsilon_M \perp \mathbf X, \quad \varepsilon_Y \perp (\mathbf X, \mathbf M)$ (see section 2.3 for details). Without loss of generality, we assume \(\alpha>0\) and \(\operatorname{sign}(\eta)=\operatorname{sign}(\gamma)\), ensuring the mediation proportion \(\mathrm{MP}=\alpha\eta/\tau\) where \(\tau=\gamma+\alpha\eta\) is well defined and bounded by \(1\).

Given the latent aggregates \(\mathbf X^*=\mathbf X\mathbf a\) and \(\mathbf M^*=\mathbf M\mathbf b\),
the ordinary least squares (OLS) coefficients of $\alpha$, $\gamma$, $\eta$ and $\tau$ are defined as the unique minimizers of $\|\mathbf M^*-\alpha\,\mathbf X^*\|_2^2$, $\|\mathbf Y-\gamma\,\mathbf X^*-\eta\,\mathbf M^*\|_2^2$ and $\|\mathbf{Y} - \tau\mathbf{X}\mathbf{a}\|_2^2$. 
Under the constraints \(\|\mathbf X^*\|_2=\|\mathbf M^*\|_2=1\), the normal equations yield
\[
\hat{\alpha}
= (\mathbf X^*)^{\!\top}\mathbf M^*,
\quad
\hat{\gamma}
= \frac{(\mathbf X^*)^{\!\top}\mathbf Y \;-\; \hat{\alpha}\,(\mathbf M^*)^{\!\top}\mathbf Y}{1-\hat{\alpha}^{2}},
\]
\[
\hat{\eta}
= \frac{(\mathbf M^*)^{\!\top}\mathbf Y \;-\; \hat{\alpha}\,(\mathbf X^*)^{\!\top}\mathbf Y}{1-\hat{\alpha}^{2}},
\quad
\hat{\tau}
=
(\mathbf X^*)^{\!\top}\mathbf Y
=
\mathbf a^{\!\top}\mathbf X^{\!\top}\mathbf Y.
\]

\medskip

However, the latent aggregates $\mathbf{X}^* = \mathbf{X}\mathbf{a}$ and $\mathbf{M}^* = \mathbf{M}\mathbf{b}$ depend on unknown weight vectors $\mathbf{a} \in \mathbb{R}^m$ and $\mathbf{b} \in \mathbb{R}^q$. Out of the possible weight vectors, we wish to capture the primary gene-neuroimaging-brain disorder mediation pathway, thus targeting at a large mediated effect (via the term involving \(\hat\alpha\hat\eta/\hat\tau\)) while encouraging sparsity in \(\mathbf a,\mathbf b\) for the best interpretability of the resulting latent aggregates. To balance three competing aims—(i) accurate estimation of the linear mediation model,  
(ii) a large mediation proportion, and (iii) sparsity for interpretability, we maximize a penalized (Lagrangian) sum of squared error objective with both a penalty on mediation proportions and sparsity penalties on weight vectors.  Formally, for observed data \((\mathbf{Y},\mathbf{X},\mathbf{M})\) and tuning parameters \(\lambda_n,\lambda_a,\lambda_b>0\), we define the following loss function:  
\begin{equation}\label{objective}
\footnotesize
\begin{aligned}
\mathcal{L}\bigl(\mathbf{a},\mathbf{b}\,;\,
\mathbf{Y},\mathbf{X},\mathbf{M},
\lambda_n,\lambda_a,\lambda_b\bigr)
 \;=\;&
\frac{1}{2n}\!\left\{
SSR_{\mathbf Y \sim \mathbf X^*}
 \;+\;
SSR_{\mathbf M^* \sim \mathbf X^*}
 \;+\;
\frac{SSR_{\mathbf Y \sim \mathbf X^*,\mathbf M^*}}{1-\hat{\alpha}^{2}}
\right\}
\\[2pt]
&\;-\;
\lambda_n\!\left(\frac{\hat{\alpha}\,\hat{\eta}}{\hat{\tau}}\right)
 \;+\;
\lambda_a\|\mathbf{a}\|_{1}
 \;+\;
\lambda_b\|\mathbf{b}\|_{1},\
\end{aligned}
\end{equation}
{where the first three sum of squared residual (SSR) terms correspond to the estimation error from outcome on latent exposure ($SSR_{\mathbf{Y} \sim \mathbf{X}^*} = \|\mathbf{Y} - \hat{\tau}\,\mathbf{X}\mathbf{a}\|_2^2$), latent mediator on latent exposure ($SSR_{\mathbf{M}^* \sim \mathbf{X}^*} = \|\mathbf{M}\mathbf{b} - \hat{\alpha}\,\mathbf{X}\mathbf{a}\|_2^2$), and outcome on latent exposure and mediator ($SSR_{\mathbf{Y} \sim \mathbf{X}^*, \mathbf{M}^*} = \|\mathbf{Y} - \hat{\gamma}\,\mathbf{X}\mathbf{a} - \hat{\eta}\,\mathbf{M}\mathbf{b}\|_2^2$), respectively. The denominator $1 - \hat{\alpha}^2$ in the third term compensates for the correlation between $\mathbf{X}^*$ and $\mathbf{M}^*$, partially compensating variance inflation due to multicollinearity. The term involving $\hat{\alpha} \hat{\eta} / \hat{\tau}$ aims to maximize the estimated mediation proportion, while the $\ell_1$ penalties encourage sparse selections of manifest exposures and mediators in the latent variables, yielding interpretable models with a small number of active features \citep{tibshirani1996regression}. This objective can, under certain regularity conditions in Section~\ref{subsec:conv_theorem}, recover the aggregation directions up to scaling and sign.

\begin{remark} 

The parameters $(\mathbf{a}, \mathbf{b}, \alpha, \gamma, \eta)$ are identifiable only under the assumptions that $\tau = \alpha + \gamma \eta \ne 0$ and that, conditional on the exposures, no mediator direction orthogonal to $\mathbf{M}\mathbf{b}$ is associated with $\mathbf{X}$ (Assumption~\ref{asmp:full_model} in Section \ref{subsec:conv_theorem}). Further justification and a proof of identifiability of the parameters (Prop.~\ref{prop:identifiability}) can be found in Section \ref{subsec:conv_theorem}.

\end{remark}

\subsection{Optimization using alternating direction method of multipliers (ADMM) algorithm} \label{subsec:opt}

{The objective function in (\ref{objective}) results in a bi-convex optimization problem in $\mathbf{a}$ and $\mathbf{b}$, we propose to solve using an alternating direction method of multipliers (ADMM) algorithm \citep{boyd2011distributed}. Each subproblem is convex and admits a closed-form soft-thresholding update, enabling efficient computation. The method thereby produces interpretable weight vectors that reveal which exposures and mediators contribute most significantly to the mediation pathway. To facilitate efficient updates under $\ell_1$-regularization, we introduce auxiliary variables $\mathbf{z}_a \in \mathbb{R}^m$ and $\mathbf{z}_b \in \mathbb{R}^q$ that decouple the sparsity-inducing penalties from the smooth residual terms. The augmented Lagrangian for this problem introduces scaled dual variables $\mathbf{u}_a$ and $\mathbf{u}_b$, and includes quadratic penalties for constraint violations:}
\begin{align*}
\mathcal{L}_\rho\bigl(\mathbf{a}, \mathbf{b}, \mathbf{z}_a, \mathbf{z}_b, \mathbf{u}_a, \mathbf{u}_b\bigr) =\;&
\frac{1}{2n} \left\| \mathbf{Y} - \hat{\tau}\, \mathbf{X} \mathbf{a} \right\|^2 
+ \frac{1}{2n} \left\| \mathbf{M} \mathbf{b} - \hat{\alpha}\, \mathbf{X} \mathbf{a} \right\|^2 
+ \frac{1}{2n}\,\frac{\left\| \mathbf{Y} - \hat{\gamma}\, \mathbf{X} \mathbf{a} - \hat{\eta}\, \mathbf{M} \mathbf{b} \right\|^2}{1-\hat{\alpha}^{2}} \\
&- \lambda_n \frac{\hat{\alpha} \hat{\eta}}{\hat{\tau}}
+ \lambda_a \|\mathbf{z}_a\|_1 + \lambda_b \|\mathbf{z}_b\|_1 \\
&+ \frac{\rho}{2} \left\|\mathbf{a} - \mathbf{z}_a + \mathbf{u}_a\right\|_2^2 
+ \frac{\rho}{2} \left\|\mathbf{b} - \mathbf{z}_b + \mathbf{u}_b\right\|_2^2,
\end{align*}
where $\rho$ is the penalty parameter in the augmented Lagrangian that controls the convergence and balance between primal and dual residuals in ADMM algorithm. These quadratic terms enforce the consensus constraints $\mathbf{a}=\mathbf{z}_a$ and $\mathbf{b}=\mathbf{z}_b$ in the scaled-dual ADMM form.

ADMM proceeds by iteratively updating (1) the \emph{primal} variables $(\mathbf{a},\mathbf{b})$, (2) the \emph{auxiliary} variables $(\mathbf{z}_a,\mathbf{z}_b)$, and (3) the \emph{dual} variables $(\mathbf{u}_a,\mathbf{u}_b)$. 
The scalar quantities $(\tau,\alpha, \gamma,\eta)$ are \emph{not} updated by ADMM steps; instead, at each iteration, they are \emph{profiled in closed form} from the current $(\mathbf{a},\mathbf{b})$ (see Remark~\ref{rem:profile_admm}). Convergence is assessed via the primal and dual residuals
\[
\mathbf{r}^{(k+1)}=\begin{bmatrix}\mathbf{a}^{(k+1)}-\mathbf{z}_a^{(k+1)}\\[2pt]\mathbf{b}^{(k+1)}-\mathbf{z}_b^{(k+1)}\end{bmatrix},
\qquad
\mathbf{s}^{(k+1)}=\rho\begin{bmatrix}\mathbf{z}_a^{(k+1)}-\mathbf{z}_a^{(k)}\\[2pt]\mathbf{z}_b^{(k+1)}-\mathbf{z}_b^{(k)}\end{bmatrix},
\]
and we terminate when $\|\mathbf{r}^{(k+1)}\|_\infty\le\varepsilon_{\mathrm{pri}}$ and $\|\mathbf{s}^{(k+1)}\|_\infty\le\varepsilon_{\mathrm{dual}}$; in practice we also stop when the relative change in the augmented Lagrangian falls below $10^{-6}$ for 20 consecutive iterations, or after 500 iterations, whichever occurs first. Detailed optimization steps can be found in Algorithm \ref{alg:admm}. 

\begin{algorithm}[htbp]
  \caption{ADMM-based optimization for penalized mediation estimation in our multi-exposure-to-multi-mediator model}
  \label{alg:admm}
  \begin{algorithmic}[1]
    \Statex \textbf{Initialize:} 
      $\mathbf{a}^{(0)},\ \mathbf{b}^{(0)},\ 
       \mathbf{z}_a^{(0)}=\mathbf{a}^{(0)},\ 
       \mathbf{z}_b^{(0)}=\mathbf{b}^{(0)},\ 
       \mathbf{u}_a^{(0)}=\mathbf{0},\ 
       \mathbf{u}_b^{(0)}=\mathbf{0}$
    \For{$k = 0,1,2,\dots$ \textbf{until convergence}}
      \State \textbf{Update $\mathbf{a}$ by blockwise coordinate descent:}
      \[
        \mathbf{a}^{(k+1)}
        = \arg\min_{\mathbf{a}}
          \mathcal{L}_\rho\bigl(\mathbf{a},\mathbf{b}^{(k)},\mathbf{z}_a^{(k)},\mathbf{z}_b^{(k)},\mathbf{u}_a^{(k)},\mathbf{u}_b^{(k)}\bigr)
      \]

      \State \textbf{Update $\mathbf{b}$ by blockwise coordinate descent:}
      \[
        \mathbf{b}^{(k+1)}
        = \arg\min_{\mathbf{b}}
          \mathcal{L}_\rho\bigl(\mathbf{a}^{(k+1)},\mathbf{b},\mathbf{z}_a^{(k)},\mathbf{z}_b^{(k)},\mathbf{u}_a^{(k)},\mathbf{u}_b^{(k)}\bigr)
      \]

      \State \textbf{Update $\mathbf{z}_a$ (soft-thresholding):}
      \[
        \mathbf{z}_a^{(k+1)}
        = \arg\min_{\mathbf{z}_a}
          \lambda_a\|\mathbf{z}_a\|_1
          + \frac{\rho}{2}\bigl\|\mathbf{z}_a - (\mathbf{a}^{(k+1)} + \rho^{-1}\mathbf{u}_a^{(k)})\bigr\|_2^2
      \]
      \[
        \Longrightarrow\quad
        \bigl(z_a^{(k+1)}\bigr)_i
        = S_{\lambda_a/\rho}\bigl((a^{(k+1)})_i + \rho^{-1}(u_a^{(k)})_i\bigr).
      \]

      \State \textbf{Update $\mathbf{z}_b$ (soft-thresholding):}
      \[
        \mathbf{z}_b^{(k+1)}
        = \arg\min_{\mathbf{z}_b}
          \lambda_b\|\mathbf{z}_b\|_1
          + \frac{\rho}{2}\bigl\|\mathbf{z}_b - (\mathbf{b}^{(k+1)} + \rho^{-1}\mathbf{u}_b^{(k)})\bigr\|_2^2
      \]
      \[
        \Longrightarrow\quad
        \bigl(z_b^{(k+1)}\bigr)_j
        = S_{\lambda_b/\rho}\bigl((b^{(k+1)})_j + \rho^{-1}(u_b^{(k)})_j\bigr).
      \]

      \State \textbf{Dual updates:}
      \[
        \mathbf{u}_a^{(k+1)} = \mathbf{u}_a^{(k)} + \rho\bigl(\mathbf{a}^{(k+1)} - \mathbf{z}_a^{(k+1)}\bigr),\quad
        \mathbf{u}_b^{(k+1)} = \mathbf{u}_b^{(k)} + \rho\bigl(\mathbf{b}^{(k+1)} - \mathbf{z}_b^{(k+1)}\bigr).
      \]

      \State \textbf{Check convergence:}
      \[
        \|\mathbf{a}^{(k+1)} - \mathbf{z}_a^{(k+1)}\|,\quad
        \|\mathbf{b}^{(k+1)} - \mathbf{z}_b^{(k+1)}\|,\quad
        \rho\|\mathbf{z}_a^{(k+1)} - \mathbf{z}_a^{(k)}\|,\quad
        \rho\|\mathbf{z}_b^{(k+1)} - \mathbf{z}_b^{(k)}\|.
      \]
    \EndFor
  \end{algorithmic}
where $i$, $j$ refer to blocks, using notations \ $(t)_+ := \max\{t,0\}~$ and ~$\operatorname{sgn}(0):=0$,
$$S_{\lambda_a/\rho}\!\Bigl((a^{(k+1)})_i+\rho^{-1}(u_a^{(k)})_i\Bigr) =
\operatorname{sgn}\!\Bigl((a^{(k+1)})_i+\rho^{-1}(u_a^{(k)})_i\Bigr)\,
\Bigl(\bigl|(a^{(k+1)})_i+\rho^{-1}(u_a^{(k)})_i\bigr|-\tfrac{\lambda_a}{\rho}\Bigr)_+$$
$$S_{\lambda_b/\rho}\!\Bigl((b^{(k+1)})_j+\rho^{-1}(u_b^{(k)})_j\Bigr) =
\operatorname{sgn}\!\Bigl((b^{(k+1)})_j+\rho^{-1}(u_b^{(k)})_j\Bigr)\,
\Bigl(\bigl|(b^{(k+1)})_j+\rho^{-1}(u_b^{(k)})_j\bigr|-\tfrac{\lambda_b}{\rho}\Bigr)_+ $$
\end{algorithm}

To mitigate non-convexity, we initialize $(\mathbf{a}, \mathbf{b})$ using singular vectors of $X^\top Y$ and residualized $M^\top Y$, and perform multiple random restarts. In practice, this helps avoid poor local minima in the bi-convex optimization. The SSR terms in (\ref{objective}) enforce adequate prediction in the outcome model, while the LASSO penalty parameters $\lambda_a$ and $\lambda_b$ emphasize sparse parameterization (i.e., the smallest possible number of nonzero $a, b$ entries) and the tuning parameter $\lambda_n$ emphasizes maximization of the mediation proportion. The best set of tuning parameters was selected via $K$-fold  cross-validation (e.g. K=5) that minimizes MSE over a grid of candidate values. Careful tuning is essential to our method (see Remark \ref{rem:tuning}), so we conducted additional sensitivity analysis by scaling the LASSO penalty parameters $\lambda_a$ and $\lambda_b$ by a constant factor $C_\lambda$ relative to the value chosen by the cross-validation and evaluated the impact of this modification on the performance of our method in simulations. (See details in the Simulation section and Table S5).


\begin{remark}\label{rem:profile_admm}
One could in principle apply ADMM to the “natural’’ joint objective that treats $(\alpha,\gamma,\eta,\tau)$ and $(\mathbf a,\mathbf b)$ as free variables. In practice this leads to coupled, nonconvex subproblems: the mediation term $-\lambda_n\,\alpha\eta/\tau$ (and the factor $1/(1-\alpha^2)$, if used) prevents closed-form proximal steps for $\mathbf a$ and $\mathbf b$. Each outer iteration would require an inner constrained solve for $(\alpha,\gamma,\eta,\tau)$, and the $\mathbf a, \mathbf b$ updates cease to be standard Lasso steps.\\
By contrast, profiling the least-squares part yields explicit maps $(\mathbf a,\mathbf b)\mapsto(\hat\alpha,\hat\gamma,\hat\eta,\hat\tau)$, after which each block update reduces to a smooth quadratic plus an $\ell_1$ penalty. This admits simple two-block ADMM with soft-thresholding, fewer hyperparameters, and better numerical stability (especially when we enforce $\hat\tau\ge r_0$ and $|\hat\alpha|\le \rho<1$). Empirically, this formulation decreases the objective monotonically and converges rapidly.
\end{remark}

\begin{remark}[Scaling]\label{rem:scaling}


Algorithm~\ref{alg:admm} enforces unit length on the {aggregates},
$\|X a\|_2=\|M b\|_2=1$, which is numerically convenient once columns of $X,M$ are standardized.
For the asymptotic theory in the Supplement we instead fix the projection weights $\mathbf a, \mathbf b$ 
to have unit Euclidean norm, $\|\mathbf a\|_2=\|\mathbf b\|_2=1$.
The entries $a_i$ and $b_j$ in the Algorithm are therefore smaller by respective factors 
$h_n^{(a)} \equiv \|\mathbf X \mathbf a\|_2/\|\mathbf a\|_2$ and $h_n^{(b)} \equiv \|\mathbf M \mathbf b\|_2/\|\mathbf b\|_2$  
than those in the Supplement, and these factors $h_n^{(a)}, \, h_n^{(b)}$ are uniformly asymptotically 
of order $\sqrt{n}$ under the sub-Gaussian designs with independent data $(X_i, M_i, Y_i)$ assumed in Assumption 4 of Section~\ref{subsec:conv_theorem}. 
The identifiability proof of Proposition~\ref{prop:identifiability} shows explicitly how scaling differences
affect the parameters $\alpha, \gamma, \eta$ and thereby change the objective function terms and the meaning 
of the penalty parameters. The statistical consistency theory in the Supplement works with unit vectors 
$\mathbf a, \mathbf b$ and parameters $\alpha, \gamma, \eta$ that are constants, not affected by sample size $n$. 
As $n$ grows, the scaling of the Algorithm (with $\|\mathbf X \mathbf a\|_2 =\|\mathbf M \mathbf b\|_2 = 1$) also 
scales $\alpha$ (when nonzero) up by a factor $h_n^{(a)}$ and $\eta, \, \gamma/\alpha$ up by a factor $h_n^{(b)}$, 
leaving the mediation proportion $\mathrm{MP}=|\alpha\eta/(\gamma+\alpha\eta)|$ invariant. The SSR norm-squared 
terms for $Y \sim X^{\ast}$ and $Y \sim M^{\ast}, X^{\ast}$ in the objective function ${\cal L}_p$ therefore 
retain the same meaning in this Section as in the Supplement, while the second SSR term in ${\cal L}_p$ is 
smaller by a factor of order $1/n$ than the other SSR terms in ${\cal L}_p$ and is therefore dropped in the Supplement . Because of the invariance of MP 
under rescaling, the penalty coefficient $\lambda_n$ plays the same role in ${\cal L}_p$ and in the Supplement. 
However, the LASSO penalty coefficients $\lambda_a, \lambda_b$ which are taken of order $\sqrt{\log n}/\sqrt{n}$
in the consistency results of the Supplement are scaled up by factors $h_n^{(a)}, h_n^{(b)}$ in 
the numerical calculations with the Algorithm, so that they should be -- and are -- 
of order $\sqrt{\log n}$ in the simulations and real-data computations. In simulations, we conducted additional sensitivity analysis to show the impact of scaling $\lambda_a, \lambda_b$ on the performance of our method and justified our selection (see details in Simulation section and results in Table S5). \\
\end{remark}

\begin{remark}[Estimation paradigm and tuning]\label{rem:tuning}
Our method follows the penalized-least-squares convention in linear mediation \citep{baron1986moderator}. Alternatively, one may use likelihood-based estimation (MLE/MPLE). We prioritize discovery of a strong mediating pathway via the term involving $\hat{\alpha}\hat{\eta}/\hat{\tau}$, while the $l_1$ penalties provide interpretability. Overly aggressive penalization with large \(\lambda_n\) can distort $(\mathbf a,\mathbf b)$, so careful tuning is essential. 
\end{remark}



\subsection{Convergence Theorem and Consistency}
\label{subsec:conv_theorem}

To rigorously justify the theoretical performance of the proposed method, we established convergence guarantees for the ADMM-based optimization algorithm described in Section~\ref{subsec:opt}. In addition, in the Supplement, we also showed the statistical consistency of the estimators. 

For fixed data matrices $(\mathbf{X},\mathbf{M},\mathbf{Y})$ let  
\[
F_n(\mathbf{a},\mathbf{b})
:= \frac{1}{2n}\{ \;
SSR_{\mathbf{Y}\sim\mathbf{X}^*}
\;+\;
SSR_{\mathbf{M}^*\sim\mathbf{X}^*}
\;+\;
\frac{SSR_{\mathbf{Y}\sim\mathbf{X}^*,\mathbf{M}^*}}
     {1-\hat{\alpha}(\mathbf{a},\mathbf{b})^{2}}
\ \}-\;
\lambda_n\,
\frac{\hat{\alpha}(\mathbf{a},\mathbf{b})\,
      \hat{\eta}(\mathbf{a},\mathbf{b})}
     {\hat{\tau}(\mathbf{a})},
\]
where $\hat{\tau},\hat{\alpha},\hat{\eta}$ are the closed-form expressions in Section~\ref{model}. We allow $\lambda_n$ to depend on $n$; the finite-sample rates below use $\lambda_n=C_n n^{-1/2}$ with $C_n=O(1)$. The penalised objective can thus be defined as: 
\[
\Phi(\mathbf a,\mathbf b)
:= F_n(\mathbf a,\mathbf b)
   + \lambda_a \|\mathbf a\|_1 + \lambda_b \|\mathbf b\|_1.
\]

\paragraph{Orders of magnitude under the two normalizations.}
With the {algorithmic} normalization $\|X a\|_2=\|M b\|_2=1$, the retained SSR terms
in $F_n/(2n)$ are $O(1)$ at the optimum, while the $l_1$-norm terms scale like
$O\!\big(\lambda_a\sqrt{s}/\sqrt{n}+\lambda_b\sqrt{s}/\sqrt{n}\big)$.
Under the {theoretical} normalization $\|a\|_2=\|b\|_2=1$ used in the Supplement,
the $l_1$-norm terms are $O\!\big(\sqrt{s\log d/n}\big)$ for $\lambda_a,\lambda_b\asymp\sqrt{\log d/n}$,
matching the standard high-dimensional regime. Our simulation therefore
includes the multiplicative factor $C_\lambda$ to align the penalty magnitudes with the
algorithmic scaling; see also Table~S5 for performance as $C_\lambda$ increases.

\begin{assumption}[Near-global solution / oracle dominance]\label{asmp:near_global}
With probability tending to $1$, the penalized objective $\Phi$ has a (possibly unique) global minimizer $\theta^\star=(a^\star,b^\star)$ in the admissible set, and the algorithm returns $\hat\theta=(\hat a,\hat b)$ such that
\[
\Phi(\hat\theta)-\Phi(\theta^\star)=o_p(1)
\quad\text{and in particular}\quad
\Phi(\hat\theta)\le \Phi(\theta_0)+o_p(1),
\]
where $\theta_0=(a_0,b_0)$ denotes the true aggregation directions. 
\end{assumption}
We emphasize that our ADMM result guarantees convergence to a first-order stationary point; consistency then invokes this assumption to connect the attained objective value to the global optimum.
We make the following assumptions to show our convergence theorem:
\begin{assumption}[Kurdyka–Łojasiewicz property]\label{asmp:KL}
$\Phi$ is semi–algebraic, hence satisfies the KŁ property at every critical point
(with possibly sample–dependent neighbourhoods and constants); see \citep{bolte2014proximal}.
\end{assumption}

\begin{assumption}\label{asmp:Assumption}
Let $\mathcal C\subset\mathbb R^{m}\times\mathbb R^{q}$ denote the feasible set for $(\mathbf a,\mathbf b)$
(including any scaling/box/sparsity constraints used in Algorithm~\ref{alg:admm}).
Fix constants $r_{0}\!>\!0$ and $\delta\!\in\!(0,1)$ and define the admissible region
\[
\mathcal C_{r_0,\delta}
\;=\;
\Bigl\{(\mathbf a,\mathbf b)\in\mathcal C:
\ \hat{\tau}(\mathbf a)\ge r_0,\;
1-\hat{\alpha}(\mathbf a,\mathbf b)^{2}\ge \delta
\Bigr\},
\]
where
$\hat{\tau}(\mathbf a)=\mathbf a^{\!\top}\mathbf X^{\!\top}\mathbf Y$ and
$\hat{\alpha}(\mathbf a,\mathbf b)=\mathbf a^{\!\top}\mathbf X^{\!\top}\mathbf M\,\mathbf b$.
Assume Algorithm~\ref{alg:admm} is run within $\mathcal C_{r_0,\delta}$ and that the mediation objective
$\Phi$ uses a fixed tuning level $\lambda_n\in(0,\infty)$.
Then $\nabla_{\!\mathbf a}\Phi$ and $\nabla_{\!\mathbf b}\Phi$ are
Lipschitz on $\mathcal C_{r_0,\delta}$,
and each block subproblem in Algorithm~\ref{alg:admm} is convex with a unique global minimiser.
\end{assumption}


\begin{assumption}[Data-generating model and underlying aggregation structure]
\label{asmp:full_model}
We assume the following data-generating mechanism for the mediation model involving multiple exposures and mediators:

\medskip
\noindent
\textbf{Latent aggregators and mediation structure.}  
There exist unknown sparse weight vectors $\mathbf a_0\in\mathbb R^{m}$ and $\mathbf b_0\in\mathbb R^{q}$ such that the primary linear projections
\[
  X^* := \mathbf X\mathbf a_0,
  \qquad
  M^* := \mathbf M\mathbf b_0
\]
satisfy the structural mediation model
\begin{equation}
\label{eq:full-mediation-model}
  M^* = \alpha_0 X^* + \varepsilon_M,
  \qquad
  Y   = \gamma_0 X^* + \eta_0 M^* + \varepsilon_Y,
\end{equation}
where $\alpha_0, \gamma_0, \eta_0 \in \mathbb R$ are fixed constants such that \ $\tau_0 = \gamma_0 + \alpha_0\eta_0 \ne 0$, and $(\varepsilon_M, \varepsilon_Y)$ are independent errors with zero mean and finite variance, and with $E(\epsilon_M^\top \epsilon_M) > 0$.

\medskip
\noindent
\textbf{Independence/regularity.}
We assume:
\begin{enumerate}[label=\textbf{(C\arabic*)}, leftmargin=14mm]
  \item\label{C1}
  \textbf{Sub-Gaussian covariates.} Each row $(\mathbf X_i, \mathbf M_i) \in \mathbb R^{m+q}$ is independent across $i=1,\dots,n$, mean-zero, and $\sigma$-sub-Gaussian:
  \[
    \mathbb{E}\left[\exp\!\left\{ \left\langle \mathbf{v},\, (\mathbf X_i, \mathbf M_i) \right\rangle \right\}\right]
    \le
    \exp\!\left\{ \tfrac{\sigma^2}{2} \|\mathbf v\|_2^2 \right\},\quad \forall\, \mathbf v\in\mathbb R^{m+q}.
  \]

  \item\label{C2}
  \textbf{Population nonsingularity.}
  Let $\Sigma_X:=\mathbb E[\mathbf X_i\mathbf X_i^{\top}]\in\mathbb R^{m\times m}$,
  $\Sigma_M:=\mathbb E[\mathbf M_i\mathbf M_i^{\top}]\in\mathbb R^{q\times q}$,
  and $\Sigma_{XM}:=\mathbb E[\mathbf X_i\mathbf M_i^{\top}]\in\mathbb R^{m\times q}$.
  Assume $\Sigma_X\succ0$, $\Sigma_M\succ0$, and the joint covariance
  \[
    \Sigma \;=\;
    \begin{pmatrix}
      \Sigma_X & \Sigma_{XM}\\
      \Sigma_{XM}^{\top} & \Sigma_M
    \end{pmatrix}
  \]
  is nonsingular.

  \item\label{C3}
  \textbf{Error independence.}
  The error terms satisfy (with $\perp$ denoting independence):
  \[
    \varepsilon_M \perp \mathbf X, \quad \varepsilon_Y \perp (\mathbf X, \mathbf M), \quad \text{and}
    \quad \varepsilon_M, \varepsilon_Y \text{ are each } \sigma\text{-sub-Gaussian}.
  \]

  \item\label{C4}
\textbf{Identifiability condition.}  
To ensure uniqueness of the projected directions, we assume the conditional expectation $\mathbb E[\mathbf M \mid \mathbf X]$ is linear:
\[
  \mathbb E[\mathbf M \mid \mathbf X] = \mathbf X A^\top,
\]
for some matrix $A \in \mathbb R^{q \times m}$ satisfying
\[
  A^\top \mathbf b_0 = \alpha_0 \mathbf a_0,
  \quad \text{and} \quad
  A^\top v = 0 \;\; \text{for all } v \perp \mathbf b_0.
\]
Equivalently, $\operatorname{col}(A^\top) = \operatorname{span}\{\mathbf a_0\}$ \ and \ 
$A^\top \; = \; (\alpha_0/\|\mathbf b_0\|_2^2) \, \mathbf a_0 \, \mathbf b_0^\top$. 
\end{enumerate}
\medskip

\end{assumption}


The latent projections $(X^{*},M^{*})$ and coefficients $(\alpha_{0},\gamma_{0},\eta_{0})$
are uniquely determined (up to scale-factors on $a,b$ respectively defined by either the pair 
$X^{\ast}, M^{\ast}$  or the pair $a,b$ being unit vectors), by the joint distribution of 
$(\mathbf X,\mathbf M,Y)$ under Assumption~\ref{asmp:full_model}. For this Proposition, we adopt the notation for 
any vector $\mathbf v \ne \mathbf 0$ that $\hat{\mathbf v} = \mathbf v/\|\mathbf v\|_2$ is the rescaled unit vector.

\begin{proposition}[Identifiability]\label{prop:identifiability}
Let $(\mathbf X,\mathbf M,Y)$ satisfy Assumption~\ref{asmp:full_model}, with fixed constants 
$(\alpha_0, \gamma_0, \eta_0)$ such that $\tau_0 = \gamma_0 + \alpha_0\eta_0 \ne 0$. Then the coefficients 
$(\alpha_0, \gamma_0, \eta_0)$ and unit vectors $\hat{\mathbf a}_0$ and $\hat{\mathbf b}_0$ are uniquely
determined by  $E(\mathbf X^\top \mathbf M), \, E(\mathbf X^\top \mathbf Y), \, E(\mathbf M^\top \mathbf Y)$. 
\end{proposition}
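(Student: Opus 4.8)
The plan is to show that the listed cross-moment matrices, together with the marginal second moments $\Sigma_X,\Sigma_M$ (which the joint law of $(\mathbf X,\mathbf M,Y)$ supplies and which are invertible by \ref{C2}), force a rank-one structure from which every parameter can be read off in closed form. Throughout I would work with the population per-observation moments $\Sigma_{XM}=\mathbb E[\mathbf X_i\mathbf M_i^\top]$, $\sigma_{XY}=\mathbb E[\mathbf X_i Y_i]$, and $\sigma_{MY}=\mathbb E[\mathbf M_i Y_i]$; the stated $E(\mathbf X^\top\mathbf M)$, $E(\mathbf X^\top\mathbf Y)$, $E(\mathbf M^\top\mathbf Y)$ equal these up to the common factor $n$, which is irrelevant for identifiability. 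I note at the outset that recovering $\hat{\mathbf a}_0$ (rather than merely the ``whitened'' direction $\Sigma_X\mathbf a_0$) and recovering $\eta_0$ will require $\Sigma_X$ and $\Sigma_M$; these are functions of the marginal laws and are used alongside the three cross-moment objects.

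First I would pass to the reduced form: substituting $M^*=\alpha_0 X^*+\varepsilon_M$ into $Y=\gamma_0 X^*+\eta_0 M^*+\varepsilon_Y$ gives $Y=\tau_0 X^*+\eta_0\varepsilon_M+\varepsilon_Y$ with $\tau_0=\gamma_0+\alpha_0\eta_0$. Using the linear conditional mean of \ref{C4} (equivalently $A^\top=(\alpha_0/\|\mathbf b_0\|_2^2)\,\mathbf a_0\mathbf b_0^\top$) together with the error-independence of \ref{C3}, I would derive the three identities
\[
\Sigma_{XM}=\frac{\alpha_0}{\|\mathbf b_0\|_2^2}\,(\Sigma_X\mathbf a_0)\,\mathbf b_0^\top,\qquad
\sigma_{XY}=\tau_0\,\Sigma_X\mathbf a_0,\qquad
\hat{\mathbf b}_0^\top\sigma_{MY}=\alpha_0\tau_0\,c+\eta_0\,\sigma_M^2,
\]
where $c=\mathbf a_0^\top\Sigma_X\mathbf a_0=\mathrm{Var}(X^*)>0$ and $\sigma_M^2=\mathrm{Var}(\varepsilon_M)>0$ (positive by the assumption $E(\varepsilon_M^\top\varepsilon_M)>0$). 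The first two are immediate from $\mathbb E[\mathbf M_i\mid\mathbf X_i]=A\mathbf X_i$ and the vanishing of $\mathbb E[\mathbf X_i\varepsilon_{M,i}]$ and $\mathbb E[\mathbf X_i\varepsilon_{Y,i}]$. The third follows by expanding $\mathrm{Cov}(M^*,Y)=\gamma_0\alpha_0 c+\eta_0(\alpha_0^2 c+\sigma_M^2)$ and substituting $\gamma_0=\tau_0-\alpha_0\eta_0$, so that the two $\alpha_0^2\eta_0 c$ terms cancel; this is the one genuinely delicate algebraic step, because it isolates $\eta_0$ through the residual variance $\sigma_M^2$ rather than through the (uninformative) quantity $\Sigma_M\mathbf b_0$.

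Next I would recover the directions and then the scalars. Since $\alpha_0\ne 0$ and $\Sigma_X\succ 0$, $\Sigma_{XM}$ is exactly rank one, so its row space is $\mathrm{span}(\mathbf b_0)$ and its normalized row direction gives $\hat{\mathbf b}_0$ up to sign (needing neither $\Sigma_X$ nor $\Sigma_M$). Both the column direction of $\Sigma_{XM}$ and the vector $\sigma_{XY}$ point along $\Sigma_X\mathbf a_0$ (using $\tau_0\ne0$), so applying $\Sigma_X^{-1}$ and normalizing recovers $\hat{\mathbf a}_0$ up to sign. With the directions fixed, the identities yield $c=\hat{\mathbf a}_0^\top\Sigma_X\hat{\mathbf a}_0$, then $\alpha_0=\hat{\mathbf a}_0^\top\Sigma_{XM}\hat{\mathbf b}_0/c$, $\tau_0=\hat{\mathbf a}_0^\top\sigma_{XY}/c$, $\sigma_M^2=\hat{\mathbf b}_0^\top\Sigma_M\hat{\mathbf b}_0-\alpha_0^2 c$, and finally $\eta_0=(\hat{\mathbf b}_0^\top\sigma_{MY}-\alpha_0\tau_0 c)/\sigma_M^2$ and $\gamma_0=\tau_0-\alpha_0\eta_0$.

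The step I expect to be the main obstacle is the sign bookkeeping, which is where ``unique determination'' must be earned rather than computed. The construction fixes $\hat{\mathbf a}_0,\hat{\mathbf b}_0$ only up to independent signs $s_a,s_b\in\{\pm1\}$; tracking the transformations shows that $\alpha_0$ acquires the factor $s_a s_b$, while the global flip $(s_a,s_b)=(-1,-1)$ leaves $\alpha_0$ invariant but reverses $(\gamma_0,\eta_0,\tau_0)$ jointly. Hence the standing convention $\alpha_0>0$ fixes the product $s_a s_b$, and fixing the orientation of the exposure aggregator so that $\tau_0=\mathrm{Cov}(X^*,Y)/c>0$ (consistent with $\mathrm{sign}(\eta_0)=\mathrm{sign}(\gamma_0)$ and $\mathrm{MP}\in(0,1]$) removes the remaining global flip. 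With both signs pinned down, $(\hat{\mathbf a}_0,\hat{\mathbf b}_0,\alpha_0,\gamma_0,\eta_0)$ are unique functions of $\Sigma_{XM},\sigma_{XY},\sigma_{MY}$ (and the invertible $\Sigma_X,\Sigma_M$), which is the claimed identifiability.
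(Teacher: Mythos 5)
Your argument for the case $\alpha_0\neq 0$ is essentially the paper's: the same three moment identities (the rank-one structure $\Sigma_{XM}=(\alpha_0/\|\mathbf b_0\|_2^2)(\Sigma_X\mathbf a_0)\mathbf b_0^\top$ from \ref{C4}, the reduced-form relation $\sigma_{XY}=\tau_0\Sigma_X\mathbf a_0$, and the expansion of $\hat{\mathbf b}_0^\top\sigma_{MY}$), and crucially the same delicate step of isolating $\eta_0$ through the strictly positive residual variance --- your cancellation producing $\alpha_0\tau_0 c+\eta_0\sigma_M^2$ is exactly the paper's curly-bracketed quantity shown to equal $E(\varepsilon_M^\top\varepsilon_M)>0$. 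Your explicit sign bookkeeping is a worthwhile addition the paper leaves implicit (the paper identifies $\tau_0\mathbf a_0$ and $\eta_0\|\mathbf b_0\|_2$ as signed quantities and does not dwell on the residual $\pm$ ambiguity of the unit vectors), and your observation that $\Sigma_X,\Sigma_M$ are needed in addition to the three listed cross-moments is also made, tacitly, in the paper's own proof.

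There is one genuine gap: you assume $\alpha_0\neq 0$ when you assert that ``$\Sigma_{XM}$ is exactly rank one, so its row space is $\mathrm{span}(\mathbf b_0)$.'' The Proposition only assumes $\tau_0=\gamma_0+\alpha_0\eta_0\neq 0$, and Assumption~\ref{asmp:full_model} permits $\alpha_0=0$, in which case $\Sigma_{XM}=\mathbf 0$ and your route to $\hat{\mathbf b}_0$ collapses (and your formula $\eta_0=(\hat{\mathbf b}_0^\top\sigma_{MY}-\alpha_0\tau_0 c)/\sigma_M^2$ would need $\hat{\mathbf b}_0$ obtained some other way). The paper treats this case separately and dispatches it quickly: $\alpha_0=0$ is detectable because it is equivalent to $E(\mathbf X^\top\mathbf M)=\mathbf 0$ under \ref{C4}; then $\tau_0\neq 0$ forces $\gamma_0\neq 0$ and $\eta_0\neq 0$, and one reads off $\hat{\mathbf b}_0$ and $\eta_0\|\mathbf b_0\|_2$ from $\Sigma_M^{-1}\sigma_{MY}=\eta_0\mathbf b_0$, and $\hat{\mathbf a}_0$ and $\gamma_0\|\mathbf a_0\|_2$ from $\Sigma_X^{-1}\sigma_{XY}=\gamma_0\mathbf a_0$. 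Adding this branch would complete your proof.
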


\begin{proof} The proof is slightly different according to whether $\alpha_0 = 0$. 
First, by Assumption~\ref{asmp:full_model}(C4), the condition $\alpha_0 = 0$ is equivalent to 
$E(\mathbf X^\top \mathbf M) = \mathbf 0$. 
If this holds, then the assumption $\tau_0 \ne 0$ implies both $\gamma_0 \ne 0, \, \eta_0 \ne 0$. 
In this case,  $\eta_0 \mathbf b = \big(E(\mathbf M^\top \mathbf M)\big)^{-1} E(\mathbf M^\top \mathbf Y)$ 
uniquely identifies $\eta_0 \|\mathbf b_0\|_2$ and $\hat{\mathbf b}_0$~; ~and then 
$\gamma_0 \mathbf a_0 = \big(E(\mathbf X^\top \mathbf X)\big)^{-1} E(\mathbf X^\top \mathbf Y)$  
uniquely identifies $\gamma_0 \|\mathbf a\|_0$ and $\hat{\mathbf a}_0$. \medskip
\par Next consider the case where $\alpha_0 \ne 0$. Again we apply the expectation of the 
structural equations (\ref{eq:full-mediation-model}) premultiplied by $\mathbf X^\top$ or $\mathbf M^\top$.
Then $\tau_0 \mathbf a_0 = \big(E(\mathbf X^\top \mathbf X)\big)^{-1} E(\mathbf X^\top \mathbf Y)$ 
uniquely identifies $\tau_0 \|\mathbf a_0\|_2$ and $\hat{\mathbf a}_0$. Also, 
$(\alpha_0/\|\mathbf b_0\|_2^2) \, \mathbf a_0 \mathbf b_0^\top = A = \big(E(\mathbf X^\top \mathbf X)\big)^{-1} 
E(\mathbf X^\top \mathbf M)$ uniquely identifies the sum of squares of $A^\top$ entries as 
$\big(\alpha_0 \|\mathbf a_0\|_2/\|\mathbf b_0\|_2\big)^2$. Then 
$$A^\top \tau_0 \mathbf a_0 = \frac{\tau_0}{\alpha_0} \, \Big( \frac{\alpha_0 \|\mathbf a\|_2}{\|\mathbf b_0\|}\Big)^2 \mathbf b_0$$ uniquely identifies $(\tau_0/\alpha_0) \mathbf b_0, \, (\tau_0/\alpha_0) \|\mathbf b_0\|_2$ ~and $\hat{\mathbf b}_0$. We next write
$$ \hat{\mathbf b}_0^\top E(\mathbf M^\top \mathbf Y) \, = \, \gamma_0 \, \hat{\mathbf b}_0^\top E(\mathbf X^\top \mathbf M)^\top \mathbf a_0 + \eta_0 \, \hat{\mathbf b}_0^\top E(\mathbf M ^\top \mathbf M) \mathbf b_0 \qquad\qquad\qquad  $$
$$\qquad  = \, \frac{\gamma_0}{\alpha_0} \|\mathbf b_0\|_2 \, \big( \frac{\alpha_0 \, \|\mathbf a_0\|_2}{\|\mathbf b_0\|_2}\Big)^2 \, \hat{\mathbf a}_0^\top E(\mathbf X^\top \mathbf X) \hat{\mathbf a}_0 \, + \, \eta_0 \ \hat{\mathbf b}_0^\top  E(\mathbf M ^\top \mathbf M) \mathbf b_0$$
so that
$$ \hat{\mathbf b}_0^\top E(\mathbf M^\top \mathbf Y) \, - \, \frac{\tau_0}{\alpha_0} \, \|\mathbf b_0\|_2 \, \big( \frac{\alpha_0 \, \|\mathbf a_0\|_2}{\|\mathbf b_0\|_2}\Big)^2 \, \hat{\mathbf a}_0^\top E(\mathbf X^\top \mathbf X) \hat{\mathbf a}_0 \qquad $$ 
$$ \qquad = \; \Big\{\hat{\mathbf b}_0^\top E(\mathbf M^\top \mathbf M) \hat{\mathbf b}_0 - \big( \frac{\alpha_0 \, \|\mathbf a_0\|_2}{\|\mathbf b_0\|_2}\Big)^2 \, \hat{\mathbf a}_0^\top E(\mathbf X^\top \mathbf X) \hat{\mathbf a}_0 \Big\} \, \eta_0 \|\mathbf b_0\|_2 $$
The left-hand side of this last expression has already been shown to be identified, as has the curly-bracketed expression 
on the right-hand side. Therefore, the right-hand side uniquely identifies $\eta_0 \|\mathbf b_0\|_2$, 
as long as we can show the curly-bracketed expression on the right-hand side is a nonzero number. 
But the curly-bracketed number times $\|\mathbf b_0\|_2^2$ equals 
$$ \mathbf b_0^\top E(\mathbf M^\top \mathbf M) \mathbf b_0 - \alpha_0^2 \, \mathbf a_0^\top E(\mathbf X^\top \mathbf X) \mathbf a_0 \, = \, E\big( (\mathbf M \mathbf b_0 - \alpha_0 \mathbf X \mathbf a_0)^\top (\mathbf M \mathbf b_0 - \alpha_0 \mathbf X \mathbf a_0)\big) = E(\epsilon_M^\top \epsilon_M) > 0$$
The Proposition has been proved. More precisely, taking general scaling into account, the proof shows that 
when $E(\mathbf X^\top \mathbf M) \ne \mathbf 0$ the expectations $E(\mathbf X^\top \mathbf M), \, 
E(\mathbf X^\top \mathbf Y), \, E(\mathbf M^\top \mathbf Y)$ jointly identify $\hat{\mathbf a}_0, \hat{\mathbf b}_0, 
\tau_0 \|\mathbf a_0\|_2, \eta_0 \|\mathbf b_0\|_2$ and $(\gamma_0/\alpha_0) \|\mathbf b_0\|_2$~; ~and 
when  $E(\mathbf X^\top \mathbf M) = \mathbf 0$, they identify $\alpha_0 = 0$ along with $\hat{\mathbf a}_0, \hat{\mathbf b}_0, 
\gamma_0 \|\mathbf a_0\|_2$ and $\eta_0 \|\mathbf b_0\|_2$.
\end{proof}

\vspace{0.5em}

\begingroup
\renewcommand\thetheorem{1}  
\begin{theorem}[Global sublinear and local linear convergence]\label{thm:conv_rate}
Suppose Assumptions~\ref{asmp:KL} and \ref{asmp:Assumption} hold. Let $\{(\mathbf a^{k},\mathbf b^{k})\}_{k\ge0}$ be the ADMM iterates from Algorithm~\ref{alg:admm}.
Then:
\begin{enumerate}[label=\textup{(\roman*)}]
\item \textbf{Cluster-point stationarity and isolation.}
The sequence is bounded; every cluster point is a first-order stationary point of $\Phi$ on $\mathcal C_{r_0,\delta}$, and cluster points are isolated.
\item \textbf{Global sublinear rate.}
There exists $C>0$ such that for all $K\ge1$,
\[
\min_{0\le t<K}
\bigl\|\nabla_{\!\mathbf a}\Phi(\mathbf a^{t},\mathbf b^{t})\bigr\|_{2}^{2}
+
\bigl\|\nabla_{\!\mathbf b}\Phi(\mathbf a^{t},\mathbf b^{t})\bigr\|_{2}^{2}
\;\le\;\frac{C}{K},
\]
and the primal/dual residuals are square–summable.
\item \textbf{Local linear rate.}
If, in a neighbourhood of a cluster point $(\mathbf a^{\ast},\mathbf b^{\ast})$, either block map is $\mu$–strongly convex (with the other block fixed), then there exists $\theta\in(0,1)$ and a neighbourhood $\mathcal N$ such that
\[
\Phi(\mathbf a^{k+1},\mathbf b^{k+1})-\Phi(\mathbf a^{\ast},\mathbf b^{\ast})
\;\le\;
\theta\bigl[\Phi(\mathbf a^{k},\mathbf b^{k})-\Phi(\mathbf a^{\ast},\mathbf b^{\ast})\bigr],
\qquad \text{whenever }(\mathbf a^{k},\mathbf b^{k})\in\mathcal N.
\]
\end{enumerate}
\end{theorem}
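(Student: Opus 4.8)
The plan is to cast Algorithm~\ref{alg:admm} as a two-block proximal/ADMM scheme of the type analyzed in \citep{bolte2014proximal} and run the standard nonconvex-ADMM machinery built on three ingredients: a \emph{sufficient-decrease} inequality for the augmented Lagrangian, a \emph{subgradient bound} relating the stationarity residual to successive-iterate gaps, and the K\L{} inequality supplied by Assumption~\ref{asmp:KL}. Throughout I write $\Phi=F_n+\lambda_a\|\mathbf a\|_1+\lambda_b\|\mathbf b\|_1$, where $F_n$ is smooth with $\nabla F_n$ Lipschitz on $\mathcal C_{r_0,\delta}$ by Assumption~\ref{asmp:Assumption}, and the $\ell_1$ terms are convex, separable, and carried by the auxiliary variables $\mathbf z_a,\mathbf z_b$. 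The ``gradient'' in part~(ii) is read as the minimal-norm element of the limiting subdifferential $\partial\Phi$, which at the iterates equals $\nabla F_n$ plus the soft-thresholding residual of the $\mathbf z$-update.

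For part~(i), boundedness is immediate since $\mathcal C_{r_0,\delta}$ is compact: the scaling/box/sparsity constraints confine $(\mathbf a,\mathbf b)$, while $\hat\tau\ge r_0$ and $1-\hat\alpha^2\ge\delta$ keep $F_n$ finite and $C^1$. The first real step is the descent lemma: for $\rho$ exceeding a threshold fixed by the Lipschitz modulus of $\nabla F_n$, each full sweep of the $\mathbf a$, $\mathbf b$, $\mathbf z$, and dual updates decreases $\mathcal L_\rho$ by at least $c\,(\|\Delta\mathbf a\|_2^2+\|\Delta\mathbf b\|_2^2+\|\Delta\mathbf z_a\|_2^2+\|\Delta\mathbf z_b\|_2^2)$. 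Because the consensus constraint matrix is the identity (full column rank), the dual increments satisfy $\|\Delta\mathbf u_a\|_2\le L\,\|\Delta\mathbf z_a\|_2$ and likewise for $b$, so the dual ascent cannot swamp the primal decrease. Monotonicity together with boundedness below on the compact set yields summability of the increments, hence $\|(\mathbf a^{k+1},\mathbf b^{k+1})-(\mathbf a^{k},\mathbf b^{k})\|_2\to0$; passing to the limit in the block optimality conditions shows every cluster point is first-order stationary. The K\L{} desingularizing bound then forces the iterates to have finite length and converge to a single limit, so the cluster set is a singleton and its points are trivially isolated.

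Part~(ii) follows by telescoping the descent lemma from $t=0$ to $K-1$: the accumulated increments are bounded by $\tfrac{2}{c}\,(\mathcal L_\rho^{(0)}-\inf\mathcal L_\rho)$, whence $\min_{0\le t<K}(\|\Delta\mathbf a\|_2^2+\|\Delta\mathbf b\|_2^2)=O(1/K)$, and the subgradient bound converts this into the stated gradient-norm rate since the minimal-norm subgradient at iterate $t$ is controlled by a constant times the successive-iterate gap. Square-summability of $\mathbf r^{(k)}$ and $\mathbf s^{(k)}$ is then immediate, as $\mathbf r^{(k+1)}=\rho^{-1}\Delta\mathbf u$ and $\mathbf s^{(k+1)}$ are each linear in increments already shown to be square-summable. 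For part~(iii), local $\mu$-strong convexity of each block map upgrades the semi-algebraic K\L{} property to exponent $\tfrac12$ near the cluster point, i.e. $|\Phi-\Phi^\ast|^{1/2}\le\kappa\,\mathrm{dist}(0,\partial\Phi)$; combining this with the sufficient decrease and the subgradient bound produces the one-step contraction $\Phi^{k+1}-\Phi^\ast\le\theta\,(\Phi^{k}-\Phi^\ast)$ for some $\theta\in(0,1)$, the standard K\L{}-exponent-$\tfrac12$ argument.

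I anticipate the main obstacle to be the sufficient-decrease lemma itself. Because $F_n$ is genuinely nonconvex through the rational mediation term $-\lambda_n\hat\alpha\hat\eta/\hat\tau$ and the variance-inflation factor $1/(1-\hat\alpha^2)$, the augmented Lagrangian need not decrease for small $\rho$, and the dual step contributes an ascent component. The delicate part is calibrating $\rho$ against the Lipschitz constant of $\nabla F_n$ on $\mathcal C_{r_0,\delta}$---using precisely the lower bounds $\hat\tau\ge r_0$ and $1-\hat\alpha^2\ge\delta$ to tame the problematic denominators---so that the primal decrease strictly dominates the dual ascent and the net decrement is coercive in the increments.
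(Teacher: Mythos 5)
Your proposal follows the same skeleton as the paper's Appendix~A proof: a sufficient-decrease inequality for the augmented Lagrangian over the block updates, summability of the successive-iterate gaps, passage to the limit in the block optimality conditions for cluster-point stationarity, and the K\L{} inequality of Assumption~\ref{asmp:KL} (with exponent $\tfrac12$ under blockwise strong convexity) for the $O(1/K)$ and local linear rates, citing the same Bolte--Sabach--Teboulle / Hong-et-al.\ machinery. The one substantive divergence is in your favor: you treat the dual step as contributing an \emph{ascent} term that must be dominated by the primal decrease, and you propose to calibrate $\rho$ against the Lipschitz modulus of $\nabla F_n$ on $\mathcal C_{r_0,\delta}$ (using $\hat\tau\ge r_0$ and $1-\hat\alpha^2\ge\delta$ to control the rational terms) so that the net decrement is coercive in the increments. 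The paper's Step~3 instead asserts that the dual updates ``leave $\mathcal L_\rho$ unchanged,'' which is not true for the scaled-dual form --- replacing $\mathbf u^k$ by $\mathbf u^{k+1}=\mathbf u^k+(\mathbf a^{k+1}-\mathbf z_a^{k+1})$ changes the quadratic penalty --- so the paper's descent chain has a gap at exactly the point your calibration of $\rho$ is designed to close. Two further refinements you make that the paper elides: you correctly read the stationarity measure as the minimal-norm element of the limiting subdifferential $\partial\Phi$ (necessary since $\Phi$ is nonsmooth through the $\ell_1$ terms, whereas the paper writes $\nabla\Phi$), and you derive isolation of cluster points from the K\L{} finite-length argument forcing a single limit, whereas the paper states isolation in part~(i) but never argues it. Net: same route, but your version identifies and repairs the weakest link in the published argument; to make it fully rigorous you would still need to write out the explicit threshold on $\rho$ and verify that Algorithm~\ref{alg:admm}'s fixed $\rho$ can be chosen to satisfy it on $\mathcal C_{r_0,\delta}$.
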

\begin{proof}[Proof of Theorem~\ref{thm:conv_rate}]
See Appendix~A.
\end{proof}
\endgroup

\medskip

In addition to establishing the convergence behavior and local optimality diagnostics for the optimization algorithm, we also showed the statistical consistency of the estimators. Detailed assumptions, proof of the main theorem (Theorem S1) and relevant lemmas can be found in the Supplement.

\begin{remark}[Numerical optimization strategy and convergence in practice]\label{rem:practical_opt}
Theorem~\ref{thm:conv_rate} guarantees convergence only to a first‐order stationary point of the non-convex objective~$\Phi$. To mitigate suboptimal convergence in practice, we employ two safeguards: (i) a \emph{local–minimum check}—after convergence, compute the Hessian of $\Phi$ restricted to the constraint tangent space at $(\hat{\mathbf a},\hat{\mathbf b})$; if the smallest eigenvalue is negative, deem the point a strict saddle and discard the run; (ii) a \emph{multi–start screening}—run from $R$ random initializations (typically $R=10$) and retain the admissible solution with the lowest objective. In our numerical experiments, this strategy returned the lowest (or numerically indistinguishable) objective value in $>95\%$ of data sets.
\end{remark}


\section{Simulation}
\subsection{Simulation settings}
In this section, we evaluated the performance of our method across scenarios involving multiple exposures, multiple mediators, and an outcome of interest. We first simulated the data of multiple exposures $\mathbf{X}_{n\times m}$ (e.g. gene expression) for $n$ subjects, where each subject's exposure vector $X_i = (x_{i1}, x_{i2}, \dots, x_{im})$ was independently drawn from a multivariate normal distribution $\mathcal{N}_m(\mathbf{0}, \Sigma_X)$, with $\Sigma_X = (1-\rho_X)\mathbf{I}_m + \rho_X\mathbf{1}_m\mathbf{1}_m^\top$, where $\mathbf{I}_m$ is the $m\times m$ identity matrix, $\mathbf{1}_m$ is an $m$-element vector of one's, and $\rho_X$ controls the correlation among the $m$ genes. Next, we generated the data of multiple mediators $\mathbf{M}_{n\times q}$ (e.g. regional neuroimaging features), where each subject's mediator vector $M_i = (m_{i1}, m_{i2}, \dots, m_{iq})^\top$ was drawn from a multivariate normal distribution $\mathcal{N}_{q}(L^TX_i, \Sigma_M)$, where $L_{m \times q}$ is the fixed coefficient matrix with \(L_{j,k} = -0.5\) for \(j = 1, \dots, m\) and \(k = 1, \dots, 5\), and zero otherwise, $\Sigma_M = (1-\rho_M)\mathbf{I}_q + \rho_M\mathbf{1}_q\mathbf{1}_q^\top$, where $\rho_M$ introduces correlation among $q$ mediators. We then aggregated the exposures and mediators $X^*_i=\mathbf{a}^TX_i$ and $M^*_i=\mathbf{b}^TM_i$ for $i=1,2,\ldots,n$, where the true weight vectors are sparse with only the first five elements being non-zero: i.e. $\mathbf{a}_{m\times 1}=(c,c,c,c,c,0,\ldots,0)^T$ and $\mathbf{b}_{q\times 1}=(c,c,c,c,c,0,\ldots,0)^T$. Finally, we simulated the outcome $\mathbf{Y}$ as $Y_i = \gamma X_i^* + \eta M_i^* + \varepsilon_Y$, for $i=1,2,\ldots,n$, where $\varepsilon_Y \sim \mathcal{N}(0, \sigma^2_Y)$.

We set $n = 200$, $m, q \in \{20, 50\}$, $\rho_X, \rho_M \in \{0, 0.3\}$, noise variances $\sigma^2_Y  \in (1,3)$ and the weights $c= 0.5$, to evaluate our method.  We considered both complete mediation ($\gamma = 0$ and $MP=1$) and partial mediation ($\gamma=\alpha \eta=0.5$ and $MP=0.5$) cases by varying the parameters $\gamma$ and $\eta$. 
Each scenario was repeated for 1000 times. For each simulation, we selected the best set of tuning parameters ($\lambda_a$, $\lambda_b$, $\lambda_n$) by five-fold cross-validation over a grid of candidate values ranging from 0.02 to 0.5. To evaluate the selection performance of manifest exposures and mediators in the latent variables, i.e. identifying the true nonzero and zero weights in $\mathbf{\hat{a}}$ and $\mathbf{\hat{b}}$, we adopted standard classification metrics: \emph{Precision} $= \frac{TP}{TP+FP}$ ; \emph{Recall}$= \frac{TP}{TP+FN}$ ; \emph{F1-score} $=\frac{2 \cdot \text{Precision} \cdot \text{Recall}}{\text{Precision} + \text{Recall}}$ and \emph{Accuracy} = $\frac{\text{TP} + \text{TN}}{\text{TP} + \text{FP} + \text{FN} + \text{TN}}$, where TP is the number of true nonzero weights in $\mathbf{\hat{a}}$ or $\mathbf{\hat{b}}$, FP is the number of false nonzero weights, TN is the number of true zero weights and FN is the number of false zero weights. To evaluate the estimation of mediation proportion, we reported the empirical absolute bias and standard deviation (SD) of the estimated mediation proportion across simulation replicates. 

{To benchmark, we compared the proposed multi-exposure-to-multi-mediator (MEMM) method against seven competitor approaches from three different classes: 
\begin{enumerate}
\item Regularization-based methods: the screening-based two-stage procedure with MCP penalty (SIS-MCP; \citep{Zhang2016HIMA}), the Pathway Lasso estimator (Path and its two-stage Lasso special case TS-Path; \citep{ZhaoLuo2022PathwayLasso}), with default parameters $\phi=2$ and $\psi=0$. 
\item Dimension reduction-based methods: the directions-of-mediation approaches with one and two directions (DM1 and DM2; \citep{ChenEtAl2018DMs}) and the sparse PCA mediation algorithm (SPCMA; \citep{ZhaoLindquistCaffo2020SPCMA}).  
\item Hybrid dimension-reduction method targeting the mediation pathway (MMP; \citealp{lee2024new}).
\end{enumerate}

In addition to the above parameter settings, to further assess the robustness of our method, we also performed a series of sensitivity analyses with: (i) misspecified $\mathbf{a}$ or $\mathbf{b}$ ($\vec{a}_{mis}=\vec{0}$ or $\vec{b}_{mis}=\vec{0}$), (ii) smaller mediation proportion $MP=0.25$ and (iii) weaker signal strength $c = 0.1$. Lastly, as a sensitivity check and to ensure consistency with our theoretical guidance (Theorem S1), we also scaled the LASSO penalty parameters by a constant factor $C_\lambda$ and evaluated its impact on the performance of our method. These results were summarized in Supplementary Tables S2–S5. 

\subsection{Simulation Results}
\begin{table}[htbp]
  \centering
  \caption{Simulation results for \textit{complete} mediation (MP=1) with $\sigma_Y^2 = 1$,  varying $m$, $q$, $\rho_{\mathbf X}$ and $\rho_{\mathbf M}$. } 
  \label{tab:complete_n200}
  \scriptsize
  \begin{tabular}{cccc|cccccc}
    \toprule
    $m,q$ & $\rho_{\mathbf X}$ & $\rho_{\mathbf M}$ & Method & MP (SD) & Abs.\ bias & Precision & Recall & F1 & Accuracy\\
    \midrule
m=20, q=20 & 0.0 & 0.0 & MEMM & 0.8746 (0.0505) & 0.1254 & 0.5885 & 1.0000 & 0.7360 & 0.8130\\
           &     &     & SIS-MCP & 0.2699 (0.0429) & 0.7330 & 0.9088 & 0.9572 & 0.9253 & 0.9588\\
           &     &     & Path & 0.1079 (0.0168) & 0.8900 & 1.0000 & 0.4317 & 0.6013 & 0.8581\\
           &     &     & TS-Path & 0.2112 (0.0307) & 0.7900 & 0.7468 & 0.6195 & 0.6732 & 0.8486\\  
           &     &     & DM1 & 0.8295 (0.0739) & 0.1630 & 0.3034 & 1.0000 & 0.4650 & 0.4206\\
           &     &     & DM2 & 0.3194 (0.0217) & 0.6780 & 0.2488 & 0.3177 & 0.2787 & 0.5895\\
           &     &     & SPCMA & 0.8555 (0.0828) & 0.1310 & 0.2944 & 1.0000 & 0.4545 & 0.3945\\
           &     &     & MMP & 0.3147 (0.0468) & 0.6910 & 0.4086 & 0.5184 & 0.4547 & 0.6944\\
    \midrule
           & 0.3 & 0.0 & MEMM & 0.9371 (0.0487) & 0.0629 & 0.8660 & 1.0000 & 0.9227 & 0.9540\\
           &     &     & SIS-MCP & 0.4008 (0.0930) & 0.6180 & 0.5991 & 0.9127 & 0.7096 & 0.8056\\
           &     &     & Path & 0.2500 (0.0001) & 0.7500 & 1.0000 & 0.4263 & 0.6173 & 0.8463\\
           &     &     & TS-Path & 0.2990 (0.0304) & 0.7070 & 0.8485 & 1.0000 & 0.9148 & 0.9512\\
           &     &     & DM1 & 0.8550 (0.0340) & 0.1520 & 0.2929 & 1.0000 & 0.4529 & 0.3950\\
           &     &     & DM2 & 0.3186 (0.0228) & 0.6840 & 0.2472 & 0.3149 & 0.2767 & 0.5890\\
           &     &     & SPCMA & 0.8765 (0.0760) & 0.1410 & 0.2872 & 1.0000 & 0.4459 & 0.3735\\
           &     &     & MMP & 0.2509 (0.0055) & 0.7490 & 0.9873 & 0.9904 & 0.9887 & 0.9945\\
    \midrule
           & 0.0 & 0.3 & MEMM & 0.9486 (0.0375) & 0.0514 & 0.8685 & 1.0000 & 0.9239 & 0.9545\\
           &     &     & SIS-MCP & 0.3215 (0.0894) & 0.6820 & 0.7605 & 0.9282 & 0.8220 & 0.8926\\
           &     &     & Path & 0.1080 (0.0157) & 0.8920 & 1.0000 & 0.4321 & 0.6012 & 0.8582\\
           &     &     & TS-Path & 0.2070 (0.0298) & 0.7940 & 0.7684 & 0.6279 & 0.6876 & 0.8571\\
           &     &     & DM1 & 0.8310 (0.0716) & 0.1720 & 0.3024 & 1.0000 & 0.4639 & 0.4189\\
           &     &     & DM2 & 0.3194 (0.0247) & 0.6840 & 0.2442 & 0.3119 & 0.2735 & 0.5868\\
           &     &     & SPCMA & 0.8530 (0.0788) & 0.1500 & 0.2949 & 1.0000 & 0.4553 & 0.3970\\
           &     &     & MMP & 0.3133 (0.0418) & 0.6820 & 0.4153 & 0.5221 & 0.4610 & 0.6978\\
    \midrule
           & 0.3 & 0.3 & MEMM & 0.9679 (0.0338) & 0.0321 & 0.9867 & 1.0000 & 0.9927 & 0.9960\\
           &     &     & SIS-MCP & 0.3976 (0.0838) & 0.5860 & 0.5643 & 0.8400 & 0.6536 & 0.7724\\
           &     &     & Path & 0.2500 (0.0001) & 0.7500 & 1.0000 & 0.4523 & 0.6134 & 0.8321\\
           &     &     & TS-Path & 0.2916 (0.0274) & 0.7090 & 0.8657 & 1.0000 & 0.9253 & 0.9581\\
           &     &     & DM1 & 0.8493 (0.0317) & 0.1510 & 0.2949 & 1.0000 & 0.4552 & 0.4008\\
           &     &     & DM2 & 0.3162 (0.0241) & 0.6830 & 0.2540 & 0.3208 & 0.2831 & 0.5944\\
           &     &     & SPCMA & 0.8585 (0.0774) & 0.1370 & 0.2935 & 1.0000 & 0.4534 & 0.3915\\
           &     &     & MMP & 0.2523 (0.0061) & 0.7477 & 0.9878 & 0.9969 & 0.9922 & 0.9963\\
    \midrule
m=50, q=50 & 0.0 & 0.0 & MEMM & 0.9236 (0.0356) & 0.0764 & 0.7124 & 0.9725 & 0.8193 & 0.8954\\
           &     &     & SIS-MCP & 0.1960 (0.0264) & 0.8040 & 1.0000 & 0.8170 & 0.8990 & 0.8040\\
           &     &     & Path & 0.0509 (0.0164) & 0.9491 & 1.0000 & 0.2120 & 0.3500 & 0.9490\\
           &     &     & TS-Path & 0.0854 (0.0223) & 0.9146 & 0.8710 & 0.3090 & 0.4550 & 0.9150\\
           &     &     & DM1 & 0.7240 (0.0495) & 0.2760 & 0.3330 & 1.0000 & 0.4990 & 0.2760\\
           &     &     & DM2 & 0.3170 (0.0132) & 0.6830 & 0.2400 & 0.3170 & 0.2730 & 0.6830\\
           &     &     & SPCMA & 0.7420 (0.0534) & 0.2580 & 0.3260 & 1.0000 & 0.4910 & 0.2580\\
           &     &     & MMP & 0.1660 (0.0263) & 0.8340 & 0.0016 & 0.0012 & 0.0014 & 0.8340\\
    \midrule
           & 0.3 & 0.0 & MEMM & 0.9110 (0.0326) & 0.0890 & 1.0000 & 1.0000 & 1.0000 & 1.0000\\
           &     &     & SIS-MCP & 0.1910 (0.0634) & 0.8090 & 1.0000 & 0.7960 & 0.8850 & 0.8090\\
           &     &     & Path & 0.0497 (0.0008) & 0.9503 & 1.0000 & 0.2070 & 0.3430 & 0.9500\\
           &     &     & TS-Path & 0.0861 (0.0231) & 0.9139 & 0.8570 & 0.3050 & 0.4490 & 0.9140\\
           &     &     & DM1 & 0.7180 (0.0246) & 0.2820 & 0.3370 & 1.0000 & 0.5030 & 0.2820\\
           &     &     & DM2 & 0.3170 (0.0153) & 0.6830 & 0.2410 & 0.3180 & 0.2740 & 0.6830\\
           &     &     & SPCMA & 0.7380 (0.0634) & 0.2620 & 0.3280 & 1.0000 & 0.4930 & 0.2620\\
           &     &     & MMP & 0.1600 (0.0275) & 0.8400 & 0.0008 & 0.0005 & 0.0006 & 0.8400\\
    \midrule
           & 0.0 & 0.3 & MEMM & 0.9296 (0.0265) & 0.0704 & 0.9984 & 0.9983 & 0.9983 & 0.9992\\
           &     &     & SIS-MCP & 0.1440 (0.0532) & 0.8560 & 1.0000 & 0.6000 & 0.7440 & 0.8560\\
           &     &     & Path & 0.2400 (0.0070) & 0.7600 & 1.0000 & 1.0000 & 1.0000 & 0.7600\\
           &     &     & TS-Path & 0.1960 (0.0172) & 0.8040 & 1.0000 & 0.8160 & 0.8990 & 0.8040\\           
           &     &     & DM1 & 0.7590 (0.0312) & 0.2410 & 0.3160 & 1.0000 & 0.4810 & 0.2410\\
           &     &     & DM2 & 0.3170 (0.0165) & 0.6830 & 0.2410 & 0.3180 & 0.2740 & 0.6830\\
           &     &     & SPCMA & 0.7390 (0.0385) & 0.2610 & 0.3270 & 1.0000 & 0.4920 & 0.2610\\
           &     &     & MMP & 0.0119 (0.0262) & 0.9881 & 0.0000 & 0.0000 & 0.0000 & 0.9880\\
    \midrule
           & 0.3 & 0.3 & MEMM & 0.9872 (0.0204) & 0.0128 & 1.0000 & 1.0000 & 1.0000 & 1.0000\\
           &     &     & SIS-MCP & 0.1150 (0.0701) & 0.8850 & 0.9980 & 0.4750 & 0.6380 & 0.8850\\
           &     &     & Path & 0.2400 (0.0063) & 0.7600 & 1.0000 & 1.0000 & 1.0000 & 0.7600\\
           &     &     & TS-Path & 0.1880 (0.0154) & 0.8120 & 1.0000 & 0.7810 & 0.8770 & 0.8120\\
           &     &     & DM1 & 0.7550 (0.0223) & 0.2450 & 0.3180 & 1.0000 & 0.4830 & 0.2450\\
           &     &     & DM2 & 0.3160 (0.0123) & 0.6840 & 0.2390 & 0.3150 & 0.2720 & 0.6840\\
           &     &     & SPCMA & 0.7390 (0.0462) & 0.2610 & 0.3270 & 1.0000 & 0.4930 & 0.2610\\
           &     &     & MMP & 0.0114 (0.0343) & 0.9886 & 0.0000 & 0.0000 & 0.0000 & 0.9890\\
    \bottomrule
  \end{tabular}
\end{table}

\begin{table}[htbp]
  \centering
\scriptsize
  \caption{Simulation results for \textit{partial} mediation (MP=0.5) with $\sigma_Y^2 = 1$,  varying $m$, $q$, $\rho_{\mathbf X}$ and $\rho_{\mathbf M}$. }
  \label{tab:partial_n200}
  \begin{tabular}{cccccccccc}
    \toprule
    $m,q$ & $\rho_{\mathbf X}$ & $\rho_{\mathbf M}$ & Method & MP (SD) & Abs.\ bias & Precision & Recall & F1 & Accuracy\\
    \midrule
m=20, q=20 & 0.0 & 0.0 & MEMM     & 0.3089\,(0.0610) & 0.1911 & 0.5028 & 0.9880 & 0.6626 & 0.7405\\
           &     &     & SIS-MCP & 0.2580\,(0.0238) & 0.2420 & 0.9690 & 0.9950 & 0.9810 & 0.2420\\
           &     &     & Path  & 0.1120\,(0.0153) & 0.3880 & 1.0000 & 0.4490 & 0.6170 & 0.3880\\
           &     &     & TS-Path       & 0.2110\,(0.0287) & 0.2890 & 0.7600 & 0.6330 & 0.6880 & 0.2890\\
           &     &     & DM1      & 0.8330\,(0.0214) & 0.3330 & 0.3020 & 1.0000 & 0.4630 & 0.3330\\
           &     &     & DM2      & 0.3180\,(0.0401) & 0.1820 & 0.2490 & 0.3170 & 0.2790 & 0.1820\\
           &     &     & SPCMA    & 0.8620\,(0.0763) & 0.3620 & 0.2920 & 1.0000 & 0.4520 & 0.3620\\
           &     &     & MMP      & 0.3080\,(0.0639) & 0.1920 & 0.4140 & 0.5140 & 0.4570 & 0.1920\\
    \midrule
           & 0.3 & 0.0 & MEMM     & 0.4427\,(0.0522) & 0.0573 & 0.7939 & 1.0000 & 0.8791 & 0.9260\\
           &     &     & SIS-MCP & 0.3400\,(0.0959) & 0.1600 & 0.7440 & 0.9460 & 0.8160 & 0.1600\\
           &     &     & Path  & 0.2500\,(0.0000) & 0.2500 & 1.0000 & 1.0000 & 1.0000 & 0.2500\\
           &     &     & TS-Path   & 0.2800\,(0.0273) & 0.2200 & 0.9010 & 1.0000 & 0.9460 & 0.2200\\
           &     &     & DM1      & 0.8510\,(0.0233) & 0.3510 & 0.2940 & 1.0000 & 0.4540 & 0.3510\\
           &     &     & DM2      & 0.3130\,(0.0063) & 0.1870 & 0.2470 & 0.3090 & 0.2740 & 0.1870\\
           &     &     & SPCMA    & 0.8610\,(0.0771) & 0.3610 & 0.2930 & 1.0000 & 0.4520 & 0.3610\\
           &     &     & MMP      & 0.2500\,(0.0321) & 0.2500 & 0.9900 & 0.9910 & 0.9910 & 0.2500\\
    \midrule
           & 0.0 & 0.3 & MEMM     & 0.4159\,(0.0856) & 0.0841 & 0.7636 & 1.0000 & 0.8586 & 0.9105\\
           &     &     & SIS-MCP & 0.3130\,(0.0745) & 0.1870 & 0.8070 & 0.9680 & 0.8690 & 0.1870\\
           &     &     & Path  & 0.1110\,(0.0157) & 0.3890 & 1.0000 & 0.4460 & 0.6140 & 0.3890\\
           &     &     & TS-Path       & 0.2130\,(0.0298) & 0.2870 & 0.7540 & 0.6340 & 0.6850 & 0.2870\\
           &     &     & DM1      & 0.8350\,(0.0225) & 0.3350 & 0.3010 & 1.0000 & 0.4620 & 0.3350\\
           &     &     & DM2      & 0.3160\,(0.0466) & 0.1840 & 0.2510 & 0.3170 & 0.2800 & 0.1840\\
           &     &     & SPCMA    & 0.8620\,(0.0755) & 0.3620 & 0.2920 & 1.0000 & 0.4520 & 0.3620\\
           &     &     & MMP      & 0.3200\,(0.0570) & 0.1800 & 0.4180 & 0.5380 & 0.4680 & 0.1800\\
    \midrule
           & 0.3 & 0.3 & MEMM     & 0.4933\,(0.0361) & 0.0067 & 0.9493 & 1.0000 & 0.9721 & 0.9845\\
           &     &     & SIS-MCP & 0.4260\,(0.0800) & 0.0740 & 0.5630 & 0.9310 & 0.6920 & 0.0745\\
           &     &     & Path  & 0.2500\,(0.0001) & 0.2500 & 1.0000 & 1.0000 & 1.0000 & 0.2500\\
           &     &     & TS-Path       & 0.2870\,(0.0274) & 0.2130 & 0.8790 & 1.0000 & 0.9330 & 0.2130\\
           &     &     & DM1      & 0.8420\,(0.0235) & 0.3420 & 0.2970 & 1.0000 & 0.4580 & 0.3420\\
           &     &     & DM2      & 0.3170\,(0.0053) & 0.1830 & 0.2520 & 0.3190 & 0.2810 & 0.1830\\
           &     &     & SPCMA    & 0.8380\,(0.0759) & 0.3380 & 0.3010 & 1.0000 & 0.4620 & 0.3380\\
           &     &     & MMP      & 0.2530\,(0.0385) & 0.2470 & 0.9840 & 0.9950 & 0.9900 & 0.2470\\
    \midrule
m=50, q=50 & 0.0 & 0.0 & MEMM     & 0.3691\,(0.0425) & 0.1309 & 0.6299 & 0.9625 & 0.7579 & 0.8496\\
           &     &     & SIS-MCP & 0.1980\,(0.0204) & 0.3020 & 1.0000 & 0.8260 & 0.9050 & 0.3020\\
           &     &     & Path  & 0.0506\,(0.0134) & 0.4494 & 1.0000 & 0.2110 & 0.3480 & 0.4490\\
           &     &     & TS-Path       & 0.0869\,(0.0233) & 0.4131 & 0.8500 & 0.3070 & 0.4500 & 0.4130\\
           &     &     & DM1      & 0.7370\,(0.0204) & 0.2370 & 0.3270 & 1.0000 & 0.4930 & 0.2370\\
           &     &     & DM2      & 0.3180\,(0.0321) & 0.1820 & 0.2420 & 0.3200 & 0.2750 & 0.1820\\
           &     &     & SPCMA    & 0.7560\,(0.0620) & 0.2560 & 0.3190 & 1.0000 & 0.4830 & 0.2560\\
           &     &     & MMP      & 0.1640\,(0.0570) & 0.3360 & 0.0015 & 0.0011 & 0.0013 & 0.3360\\
    \midrule
           & 0.3 & 0.0 & MEMM     & 0.4591\,(0.0456) & 0.0409 & 0.9962 & 1.0000 & 0.9980 & 0.9990\\
           &     &     & SIS-MCP & 0.1700\,(0.0564) & 0.3300 & 1.0000 & 0.7090 & 0.8260 & 0.3300\\
           &     &     & Path  & 0.2400\,(0.0001) & 0.2600 & 1.0000 & 1.0000 & 1.0000 & 0.2600\\
           &     &     & TS-Path       & 0.1990\,(0.0215) & 0.3010 & 1.0000 & 0.8270 & 0.9050 & 0.3010\\
           &     &     & DM1      & 0.7500\,(0.0218) & 0.2500 & 0.3200 & 1.0000 & 0.4850 & 0.2500\\
           &     &     & DM2      & 0.3170\,(0.0329) & 0.1830 & 0.2410 & 0.3180 & 0.2740 & 0.1830\\
           &     &     & SPCMA    & 0.7330\,(0.0661) & 0.2670 & 0.3300 & 1.0000 & 0.4950 & 0.2330\\
           &     &     & MMP      & 0.0126\,(0.0302) & 0.4874 & 0.0000 & 0.0000 & 0.0000 & 0.4870\\
    \midrule
           & 0.0 & 0.3 & MEMM     & 0.4586\,(0.0447) & 0.0414 & 0.9954 & 1.0000 & 0.9976 & 0.9988\\
           &     &     & SIS-MCP & 0.1950\,(0.0533) & 0.3050 & 1.0000 & 0.8120 & 0.8960 & 0.3050\\
           &     &     & Path  & 0.0502\,(0.0127) & 0.4498 & 1.0000 & 0.2090 & 0.3450 & 0.4500\\
           &     &     & TS-Path       & 0.0869\,(0.0285) & 0.4131 & 0.8650 & 0.3120 & 0.4580 & 0.4130\\
           &     &     & DM1      & 0.7310\,(0.0216) & 0.2310 & 0.3300 & 1.0000 & 0.4960 & 0.2310\\
           &     &     & DM2      & 0.3170\,(0.0419) & 0.1830 & 0.2380 & 0.3140 & 0.2710 & 0.1830\\
           &     &     & SPCMA    & 0.7490\,(0.0642) & 0.2510 & 0.3230 & 1.0000 & 0.4870 & 0.2490\\
           &     &     & MMP      & 0.1600\,(0.0494) & 0.3400 & 0.0005 & 0.0003 & 0.0004 & 0.3400\\
    \midrule
           & 0.3 & 0.3 & MEMM     & 0.4950\,(0.0341) & 0.0050 & 1.0000 & 1.0000 & 1.0000 & 1.0000\\
           &     &     & SIS-MCP & 0.1350\,(0.0463) & 0.3650 & 1.0000 & 0.5640 & 0.7170 & 0.3650\\
           &     &     & Path  & 0.2400\,(0.0001) & 0.2600 & 1.0000 & 1.0000 & 1.0000 & 0.2600\\
           &     &     & TS-Path      & 0.1950\,(0.0209) & 0.3050 & 1.0000 & 0.8130 & 0.8970 & 0.3050\\
           &     &     & DM1      & 0.7590\,(0.0194) & 0.2410 & 0.3170 & 1.0000 & 0.4810 & 0.2590\\
           &     &     & DM2      & 0.3170\,(0.0358) & 0.1830 & 0.2400 & 0.3170 & 0.2730 & 0.1830\\
           &     &     & SPCMA    & 0.7430\,(0.0497) & 0.2430 & 0.3250 & 1.0000 & 0.4900 & 0.2430\\
           &     &     & MMP      & 0.0128\,(0.0420) & 0.4872 & 0.0000 & 0.0000 & 0.0000 & 0.4870\\
    \bottomrule
  \end{tabular}
\end{table}

Across all simulation scenarios, the proposed MEMM method consistently achieves the lowest absolute bias in mediation proportion (MP) estimation and exhibits the best or near-best weight selection performance (Table~\ref{tab:complete_n200},~\ref{tab:partial_n200}). Under complete mediation, MEMM accurately recovers the true MP with negligible bias, especially when both exposure and mediator correlations are high ($\rho_X, \rho_M = 0.3$), suggesting that the aggregation structure is particularly effective in the presence of correlated features. The advantage of MEMM remained even with larger number of exposures and mediators ($m=50$,$q=50$). In contrast, regularization based methods (SIS-MCP, Path and TS-Path) accurately selected the weights but persistently displayed large bias in MP estimation with even worse performance when the correlation and number of features increase. The dimension reduction methods (DM1/DM2 and SPCMA), on the other hand, had smaller bias for MP estimation but poorer performance in weight selection (e.g. selecting many false positive weights thus low precision), making it harder to interpret the final aggregators. The hybrid method MMP, originally designed for one exposure only, performed poorly in this setting especially when the number of features and between-feature correlation increased. Our method MEMM well balanced the accurate estimation of MP that captures the predominant mediation pathway and accurate selection of weights for the best interpretability of the aggregated exposures and mediators, well suited to the multi-exposure-to-multi-mediator problem. 

Our advantage over other competing methods remained with partial mediation (MP=0.5). Specifically, MEMM maintained a low absolute bias in MP estimation for partial mediation without over-estimating MP, showing the strength of our objective function that balanced the likelihood, mediation proportion and weight estimation. By contrast, DM1/DM2 and SPCMA tend to over-estimate the mediation proportion. The weight selection performance also dropped dramatically for the regularization based methods with partial mediation. As the noise level increased (\(\sigma_Y^2 = 3\); Table S1), all methods had worsened performance but MEMM remained as the best performer especially when the between-feature correlation increased. 

When $a$ was misspecified, the MP estimation and the overall selection of true zero and nonzero weights remained accurate (Table S2). Misspecification of $b$ had a stronger negative impact on the MP estimation, though the precision of weight selection by our method remained high, and the overall accuracy of selection increased as the correlation increased (Table S2). When the true MP was smaller (MP=0.25), or the signal strength was weakened  ($c=0.1$), our method performed consistently well with close to true MP estimation and high accuracy in selecting the true zero and nonzero weights (Table S3-S4). Table S5 examines the impact of scaling up the LASSO penalty parameters $\lambda_a$ and $\lambda_b$ on the performance of our method under complete mediation (MP=1) when n=200, m=q=20 and ($\rho_X,\rho_M)=(0.3,0.3)$.
The new set of penalty parameters $\lambda_a=C_\lambda \lambda_a^{\text{base}}$, $\lambda_b= C_\lambda \lambda_b^{\text{base}}$, where $\lambda_a^{\text{base}}=\lambda_b^{\text{base}}=0.15$ falls in the original range of grid search, $C_\lambda \in \{1,2,5,10,15\}$ is the scaling factor, $\lambda_n$ is fixed to be 0.1. The MP estimation and the selection of true zero and nonzero weights remained highly accurate with modest penalization ($C_\lambda=1,2,5$, which is of order $\sqrt{\log n}$; Table S5). With too strong penalization ($C_\lambda=10,15$), the model is subject to over-shrinkage so the recall and F1 score dropped rapidly.




\section{Application to Real Data}
{Nicotine dependence is a multifaceted brain disorder with a significant heritable component, and a leading modifiable risk factor for cancer, pulmonary, cardiovascular and cerebrovascular disorders \citep{bartal2001health}. Large-scale GWAS and TWAS on different measures of tobacco smoking behaviors \citep{erzurumluoglu2020meta,xu2020genome,buchwald2021genome,ye2021meta} have reported multiple replicable genetic loci and clusters of genes potentially associated with nicotine dependence. In the same time, neuroimaging studies using magnetic resonance imaging (MRI) data have revealed alterations in both gray and white matter structural and functional characteristics associated with nicotine addiction \citep{kuhn2012brain,gray2020associations}, yet the pathways from genes to brain alternations to nicotine addiction remained unclear. In this section, we applied our MEMM method that integrates genetically regulated expression (GReX), resting-state functional MRI (rs-fMRI), and smoking behavior data from UK Biobank (UKB) to evaluate the potential ``gene-brain functional connectivity-nicotine addiction'' mediation pathway, exploring how genetic predispositions influence brain circuits, which in turn make people more addicted to nicotine.}

As in most other studies, we treated cigarettes per day (CPD) as the main outcome of interest to assess nicotine dependence, and also considered the former and current smoking status in selecting the samples for analysis following the validated diagnostic criteria for nicotine dependence \citep{Heatherton1991}. We first obtained the GReX data from the genotype data in UKB and the GTEx \citep{gtex2020gtex} reference panel data in three addiction-relevant brain tissues—Nucleus Accumbens (NAcc), Caudate (Ca), and Putamen (Pu), all subregions of the basal ganglia. Following our previous works of TWAS on cigarettes per day (CPD) using MultiXcan \citep{barbeira2019integrating,ye2021meta,wang2024tips}, we selected a group of $m=21$ genes most significantly associated with CPD on chromosome 15 ($p<1e-8$; see Fig S1 for the TWAS Manhattan plot) and used their GReX values aggregated over three brain tissues as exposures in our study. Region of Interest (ROI)-to-ROI functional connectivity (FC) features were computed from preprocessed rs-fMRI data. We restricted the analysis to basal ganglia and subcortical lobe regions most relevant to nicotine addiction and kept $q=635$ ROI-to-ROI FC measures as potential mediators for our analysis (Fig S2). Our final analytical sample includes $n=710$ participants with complete exposure, mediator and outcome data. As we can see, many genes and many FC measures in these samples are highly correlated with each other (Fig 2A and 2B), laying out the basis for our reduced rank based modeling of both exposures and mediators. 





We applied a five-fold cross-validation over a coarse-to-fine grid to select the best $\lambda_a$, $\lambda_b$ and $\lambda_n$, where $\lambda_a, \lambda_b \in \{ 0.02,0.05,0.10,0.20,0.30,0.50 \}$, $\lambda_n \in \{0.02,0.05,0.08,0.10,0.15\}$. 
The final set of penalty parameters selected was $(\lambda_a,\lambda_b,\lambda_n)=(0.30,\,0.10,\,0.08)$. The ADMM algorithm converged in $9$ iterations, and the full run (including cross-validation) completed in about 20 minutes. The estimated MP is approximately 0.40, indicating that roughly half of the genetic effects on CPD on chromosome 15 was mediated through the brain FC between ROIs retained in the model. This result suggested that the functional connectome between specific brain regions, the brain’s dynamic wiring diagram, may serve as a key conduit translating genetic influences into brain disorders like nicotine addiction. At the same time, the non-mediated portion of the effect suggested alternative biological pathways (e.g., hormonal, metabolic, or peripheral nervous system routes). For the exposures, we identified a coherent subset of six genes with nonzero weights (Table 3). All the three genes in the \textit{CHRNA5-CHRNA3-CHRNB4} nicotinic receptor subunit gene cluster \citep{lassi2016chrna5} were selected, affirming their role in nicotine response and smoking behavior. These genes encoded subunits of nicotinic acetylcholine receptors (nAChRs), and their altered expression is known to influence nicotine intake and receptor sensitivity \citep{fowler2011alpha5}. \textit{PSMA4} and \textit{CTSH}, both encoded for protease involved in protein degradation pathways, may influence brain signaling pathways, neuroadaptations and neuroinflammation linked to nicotine addiction \citep{rezvani2007nicotine,sherva2016genetic}. 



For the mediators, we identified a sparse subset of 32 FC measures spanning several addiction-relevant regions (Table S6). Among these were functional connectivity between middle frontal gyrus (MFG), precentral gyrus (PrG), thalamus and superior parietal lobule (SPL) regions. MFG has played a significant role in inhibitory control and reward processing of nicotine \citep{hong2009association}. The functional connectivity between SPL and the other brain regions (e.g. the insula for reward processing and craving) has been found correlated with the severity of nicotine dependence \citep{ghahremani2023nicotine}, and the thalamus is enriched in nicotinic receptors \citep{spurden1997nicotinic}. Our findings suggest that genetic variation affecting expression of nicotinic receptors may exert influence on CPD via alterations in MFG, SPL and thalamus subregions.

\begin{figure*}[htbp] 
  \centering
    \includegraphics[scale=0.6]{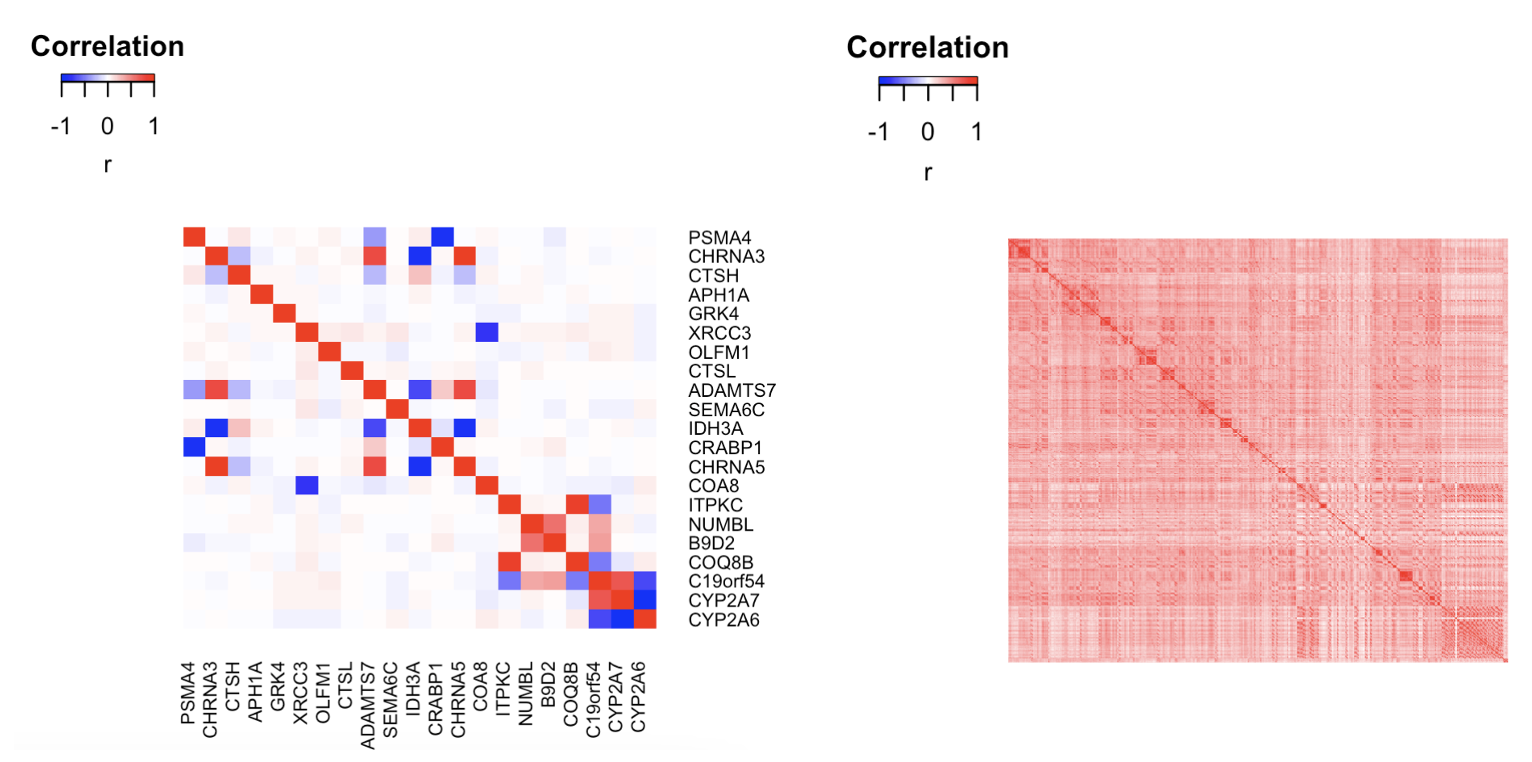} 
\caption{
(A) Pairwise correlations among the expression of 21 genes used as exposures in our analysis.  
(B) Pairwise correlations among the 635 ROI-to-ROI functional connectivity (FC) measures used as mediators in our analysis.  
}
  \label{fig:2}
\end{figure*}

\begin{table}[htbp]
  \caption{List of genes selected by MEMM for the ``gene-brain FC-nicotine addiction'' mediation pathway}
  \small
  \label{tab:gene}
  \begin{tabular}{p{3cm} p{13cm}}
    \toprule 
      Gene symbols & Gene function \\
     \midrule
        \textit{CHRNA3} & encodes a subunit of the neuronal nicotinic acetylcholine receptor (nAChR) called alpha-3 subunit. \\
        \textit{CHRNA5} & encodes a subunit of the neuronal nicotinic acetylcholine receptor (nAChR) called alpha-5 subunit. \\
        \textit{PSMA4} & encodes a subunit of the 20S proteasome, part of the ubiquitin-proteasome system responsible for protein degradation within cells. \\
        \textit{GRK4} & encodes G protein-coupled receptor kinase 4, which plays a key role in regulating blood pressure. \\
        \textit{CRABP1} & encodes a carrier protein binding to retinoic acid and facilitating its transport within the cell.  \\
        \textit{CTSH} & encodes cathepsin H, a lysosomal cysteine protease crucial for protein degradation within cells. \\
    \bottomrule
  \end{tabular}
\end{table}

\section{Discussion}
In this paper, we proposed a new multi-exposure-to-multi-mediator mediation model for imaging genetic study of brain disorders to understand how the multivariate genetic risk factors impact the brain disorders via changing the multivariate neuroimaging features, including both structural and functional changes. We constructed sparse aggregated variables that captured much of the indirect effect in the mediation pathway, balancing the estimation of mediation proportion and feature selection for interpretability in a mediation model with multiple exposures and mediators. We proposed a fast algorithm based on ADMM to solve the optimization and theoretically justified the convergence guarantees of the algorithm and consistency of the estimator. An efficient R package called ``MEMM'' was developed (https://github.com/nwang123/MEMM) to implement the method.


Our model is among the very few that perform mediation analysis with multiple exposures and multiple mediators. In imaging genetic study specifically, it is widely known that structural or functional changes in the brain can partially explain the genetic effect on brain disorders. However, with the large number of relevant genes and neuroimaging features, identifying the predominant mediation pathway through which primary genetic determinants of brain disorders exert their effects via functional or structural alterations in specific brain regions remains challenging. Our method proposed one solution that identified the best mediation pathway with large mediation proportion while in the same time selecting the most critical genes and regional neuroimaging features involved in the pathway for better interpretability. In the examples we showed, we assumed the number of both exposures and mediators are of moderate size (at most $n$) as both need to be related to the outcome of interest. When both exposures and mediators are high-dimensional ($\gg n$), the algorithm can be less stable and computationally intensive. In this case, we encourage a two-stage approach by first screening out noisy exposures and mediators simultaneously using advanced high-dimensional-to-high-dimensional sure screening techniques \citep{ke2022high} and then performing our method. 



The current method is built on the basis of a linear mediation model. For genetic and neuroimaging features, it is common that more complex nonlinear relationship and mediation effect may exist. Future work could focus on incorporating nonlinearity into our current mediation model \citep{loh2022nonlinear} and adopting nonlinear dimension reduction in the reduce rank aggregation \citep{lee2007nonlinear}. Lastly, establishing the true causal mediation relies on major identification assumptions \citep{imai2010general} that are usually not directly testable for the large-scale genetic and imaging data in cross-sectional studies like UKB. Future research could focus on longitudinal cohorts such as Adolescent Brain Cognitive Development (ABCD) study \citep{casey2018adolescent} that includes longitudinal imaging mediators and outcome with possible temporal ordering to validate the identified mediation pathways.


\section*{Author contributions}

\textbf{Neng Wang:} methodology, theoretical proof, simulation, data collection and analysis, software, original manuscript preparation and editing. \textbf{Eric V Slud:} supervision of the project, theoretical proof, manuscript review and editing. \textbf{Tianzhou Ma:} conceptualization, supervision of the project, manuscript review and editing.

\section*{Acknowledgments}
This work was supported in part by the National Institutes of Health (NIDA 1K01DA059603) and Grand Challenge Grant by the University of Maryland.

\section*{Appendix A.}

\begin{proof}[Proof of Theorem~\ref{thm:conv_rate}]
Define the augmented Lagrangian
\[
\mathcal{L}_\rho\bigl(\mathbf{a},\mathbf{b},\mathbf{z}_a,\mathbf{z}_b,\mathbf{u}_a,\mathbf{u}_b\bigr)
:=F(\mathbf{a},\mathbf{b})
+G_a(\mathbf{z}_a)
+G_b(\mathbf{z}_b)
+\frac{\rho}{2}\|\mathbf{a}-\mathbf{z}_a+\mathbf{u}_a\|_2^2
+\frac{\rho}{2}\|\mathbf{b}-\mathbf{z}_b+\mathbf{u}_b\|_2^2,
\]
where $G_a(\mathbf{z}_a)=\lambda_a \|\mathbf{z}_a\|_1$ and $G_b(\mathbf{z}_b)=\lambda_b \|\mathbf{z}_b\|_1$ and abbreviate
\[
\mathcal{L}_\rho^k
:=\mathcal{L}_\rho\bigl(\mathbf{a}^k,\mathbf{b}^k,\mathbf{z}_a^k,\mathbf{z}_b^k,\mathbf{u}_a^k,\mathbf{u}_b^k\bigr),
\quad
\Delta_k
:=\|\mathbf{a}^{k+1}-\mathbf{z}_a^{k+1}\|_2^2
     +\|\mathbf{b}^{k+1}-\mathbf{z}_b^{k+1}\|_2^2.
\]

\paragraph{Step 1: Descent under the $\mathbf{a}$– and $\mathbf{b}$–updates.}
By Assumption~\ref{asmp:Assumption} (convexity and unique minimisers of the block subproblems), the exact minimisation in
each of the $\mathbf{a}$ and $\mathbf{b}$ blocks yields
\[
\mathcal{L}_\rho\bigl(\mathbf{a}^{k+1},\mathbf{b}^{k+1},\mathbf{z}_a^k,\mathbf{z}_b^k,\mathbf{u}_a^k,\mathbf{u}_b^k\bigr)
\;\le\;
\mathcal{L}_\rho^k.
\tag{1}
\]

\paragraph{Step 2: Descent under the $\mathbf{z}$–updates.}
The proximal updates
\(
\mathbf{z}_a^{k+1}
= S_{\lambda_a/\rho}\bigl(\mathbf{a}^{k+1}+\mathbf{u}_a^k\bigr)
\)
and
\(
\mathbf{z}_b^{k+1}
= S_{\lambda_b/\rho}\bigl(\mathbf{b}^{k+1}+\mathbf{u}_b^k\bigr)
\)
satisfy
\[
\mathcal{L}_\rho\bigl(\mathbf{a}^{k+1},\mathbf{b}^{k+1},
                      \mathbf{z}_a^{k+1},\mathbf{z}_b^{k+1},
                      \mathbf{u}_a^k,\mathbf{u}_b^k\bigr)
\;\le\;
\mathcal{L}_\rho\bigl(\mathbf{a}^{k+1},\mathbf{b}^{k+1},
                      \mathbf{z}_a^k,\mathbf{z}_b^k,
                      \mathbf{u}_a^k,\mathbf{u}_b^k\bigr)
-\frac{\rho}{2}\,\Delta_k.
\tag{2}
\]

\paragraph{Step 3: Summability of the primal residuals.}
The dual updates
\(
\mathbf{u}_a^{k+1}=\mathbf{u}_a^k+\mathbf{a}^{k+1}-\mathbf{z}_a^{k+1},
\quad
\mathbf{u}_b^{k+1}=\mathbf{u}_b^k+\mathbf{b}^{k+1}-\mathbf{z}_b^{k+1}
\)
leave \(\mathcal{L}_\rho\) unchanged.  Combining (1) and (2) gives
\[
\mathcal{L}_\rho^{\,k+1}
\;\le\;
\mathcal{L}_\rho^{\,k}
-\frac{\rho}{2}\,\Delta_k.
\]
Summing from \(k=0\) to \(K-1\) yields
\[
\mathcal{L}_\rho^{\,K}
+\frac{\rho}{2}\sum_{k=0}^{K-1}\Delta_k
\;\le\;
\mathcal{L}_\rho^{\,0}.
\]
By Assumption~\ref{asmp:Assumption} and the fact that the domain of the iterates is restricted to the admissible region \(\mathcal C_{r_0,\delta}\), the augmented Lagrangian is bounded below; hence
\(\sum_{k=0}^\infty \Delta_k<\infty\)
and, in particular, \(\Delta_k\to0\).

\paragraph{Step 4: Boundedness and stationarity.}
Since \((\mathbf{a}^k,\mathbf{b}^k)\in\mathcal{C}\) for all \(k\) and
\(\mathcal{C}\) is compact, the primal iterates are bounded.  Dual
iterates remain bounded because their increments
\(\mathbf{a}^{k+1}-\mathbf{z}_a^{k+1}\) and
\(\mathbf{b}^{k+1}-\mathbf{z}_b^{k+1}\) vanish.  Hence cluster points
exist.

Exact optimality in the \(\mathbf{a}\) block gives
\[
0
\;=\;
\nabla_{\!\mathbf{a}}F(\mathbf{a}^{k+1},\mathbf{b}^k)
+\rho\bigl(\mathbf{a}^{k+1}-\mathbf{z}_a^k+\mathbf{u}_a^k\bigr),
\]
and after the dual update one shows
\(\nabla_{\!\mathbf{a}}\Phi(\mathbf{a}^{k+1},\mathbf{b}^{k+1})
+\rho(\mathbf{a}^{k+1}-\mathbf{z}_a^{k+1})=0\).
Similarly for \(\mathbf{b}\).  Taking norms and using
\(\Delta_k\to0\) yields
\(\|\nabla\Phi(\mathbf{a}^k,\mathbf{b}^k)\|\to0\).  Thus every cluster
point is a first‐order stationary point of \(\Phi\) on \(\mathcal{C}\),
proving part~(i).

\paragraph{Step 5: Convergence rates via the KŁ inequality.}
From Steps 1–3 we have a descent
\(\Phi(\mathbf{a}^k,\mathbf{b}^k)-\Phi(\mathbf{a}^{k+1},\mathbf{b}^{k+1})
\ge\tfrac{\rho}{2}\,\Delta_k\)
and \(\sum\Delta_k<\infty\).  By the Kurdyka–Łojasiewicz (KŁ) property assumed in Assumption \ref{asmp:KL}, there exist constants \(c>0\) and \(\theta\in[0,1)\) such that,
in a neighbourhood of any cluster point,
\[
\|\nabla\Phi(\mathbf{a},\mathbf{b})\|
\;\ge\;
c\,\bigl[\Phi(\mathbf{a},\mathbf{b})-\Phi^\star\bigr]^{\theta}.
\]
Standard arguments (cf.\ \citep[Thm.~2]{hong2016convergence}) then yield
the global \(\mathcal{O}(1/K)\) rate in part~(ii).  If additionally \(F\) is strongly convex in
one block (with the other block fixed), the KŁ exponent satisfies \(\theta=\tfrac12\),
giving the local linear rate in part~(iii).
\end{proof}

\bibliographystyle{elsarticle-num}
\bibliography{references}

\begin{thebibliography}{10}
\expandafter\ifx\csname url\endcsname\relax
  \def\url#1{\texttt{#1}}\fi
\expandafter\ifx\csname urlprefix\endcsname\relax\def\urlprefix{URL }\fi
\expandafter\ifx\csname href\endcsname\relax
  \def\href#1#2{#2} \def\path#1{#1}\fi

\bibitem{brainstorm2018analysis}
{Brainstorm Consortium}, V.~Anttila, B.~Bulik-Sullivan, H.~K. Finucane, R.~K.
  Walters, J.~Bras, L.~Duncan, V.~Escott-Price, G.~J. Falcone, P.~Gormley,
  et~al., Analysis of shared heritability in common disorders of the brain,
  Science 360~(6395) (2018) eaap8757.

\bibitem{kaufmann2019common}
T.~Kaufmann, D.~van~der Meer, N.~T. Doan, et~al., Common brain disorders are
  associated with heritable patterns of apparent aging of the brain, Nature
  Neuroscience 22~(10) (2019) 1617--1623.
\newblock \href {https://doi.org/10.1038/s41593-019-0471-7}
  {\path{doi:10.1038/s41593-019-0471-7}}.

\bibitem{horwitz2019decade}
T.~Horwitz, K.~Lam, Y.~Chen, Y.~Xia, C.~Liu, A decade in psychiatric gwas
  research, Molecular psychiatry 24~(3) (2019) 378--389.

\bibitem{ye2021meta}
Z.~Ye, C.~Mo, H.~Ke, Q.~Yan, C.~Chen, P.~Kochunov, L.~E. Hong, B.~D. Mitchell,
  S.~Chen, T.~Ma, Meta-analysis of transcriptome-wide association studies
  across 13 brain tissues identified novel clusters of genes associated with
  nicotine addiction, Genes 13~(1) (2021) 37.

\bibitem{hatoum2023multivariate}
A.~S. Hatoum, S.~M. Colbert, E.~C. Johnson, S.~B. Huggett, J.~D. Deak, G.~A.
  Pathak, M.~V. Jennings, S.~E. Paul, N.~R. Karcher, I.~Hansen, et~al.,
  Multivariate genome-wide association meta-analysis of over 1 million subjects
  identifies loci underlying multiple substance use disorders, Nature Mental
  Health 1~(3) (2023) 210--223.

\bibitem{elliott2018genome}
L.~T. Elliott, K.~Sharp, F.~Alfaro-Almagro, et~al., Genome-wide association
  studies of brain imaging phenotypes in {UK} biobank, Nature 562~(7726) (2018)
  210--216.
\newblock \href {https://doi.org/10.1038/s41586-018-0571-7}
  {\path{doi:10.1038/s41586-018-0571-7}}.

\bibitem{bi2017genome}
X.~Bi, L.~Yang, T.~Li, B.~Wang, H.~Zhu, H.~Zhang, Genome-wide mediation
  analysis of psychiatric and cognitive traits through imaging phenotypes,
  Human brain mapping 38~(8) (2017) 4088--4097.

\bibitem{mackinnon2000equivalence}
D.~P. MacKinnon, J.~L. Krull, C.~M. Lockwood, Equivalence of the mediation,
  confounding and suppression effect, Prevention Science 1~(4) (2000) 173--181.
\newblock \href {https://doi.org/10.1023/A:1026595011371}
  {\path{doi:10.1023/A:1026595011371}}.

\bibitem{zhang2016estimating}
H.~Zhang, Y.~Zheng, Z.~Zhang, T.~Gao, B.~Joyce, G.~Yoon, W.~Zhang, J.~Schwartz,
  A.~Just, E.~Colicino, et~al., Estimating and testing high-dimensional
  mediation effects in epigenetic studies, Bioinformatics 32~(20) (2016)
  3150--3154.

\bibitem{zhao2022pathway}
Y.~Zhao, X.~Luo, Pathway {L}asso: pathway estimation and selection with
  high-dimensional mediators, Statistics and Its Interface 15~(1) (2022)
  39--50.
\newblock \href {https://doi.org/10.4310/21-sii673}
  {\path{doi:10.4310/21-sii673}}.

\bibitem{perera2022hima2}
C.~Perera, H.~Zhang, Y.~Zheng, L.~Hou, A.~Qu, C.~Zheng, K.~Xie, L.~Liu, Hima2:
  high-dimensional mediation analysis and its application in epigenome-wide dna
  methylation data, BMC bioinformatics 23~(1) (2022) 296.

\bibitem{zhang2022high}
Q.~Zhang, High-dimensional mediation analysis with applications to causal gene
  identification, Statistics in biosciences 14~(3) (2022) 432--451.

\bibitem{huang2016hypothesis}
Y.-T. Huang, W.-C. Pan, Hypothesis test of mediation effect in causal mediation
  model with high-dimensional continuous mediators, Biometrics 72~(2) (2016)
  402--413.
\newblock \href {https://doi.org/10.1111/biom.12421}
  {\path{doi:10.1111/biom.12421}}.

\bibitem{gao2019testing}
Y.~Gao, H.~Yang, R.~Fang, Y.~Zhang, E.~L. Goode, Y.~Cui, Testing mediation
  effects in high-dimensional epigenetic studies, Frontiers in genetics 10
  (2019) 1195.

\bibitem{zhao2020multimodal}
Y.~Zhao, Y.~Fan, A.~Qiu, et~al., Multimodal mediation analysis of brain imaging
  data, IEEE Transactions on Medical Imaging 39~(9) (2020) 2862--2873.
\newblock \href {https://doi.org/10.1109/TMI.2020.2993485}
  {\path{doi:10.1109/TMI.2020.2993485}}.

\bibitem{chen2018high}
O.~Y. Ch\'en, C.~M. Crainiceanu, E.~L. Ogburn, et~al., High-dimensional
  multivariate mediation with application to neuroimaging data, Biostatistics
  19~(2) (2018) 121--136.
\newblock \href {https://doi.org/10.1093/biostatistics/kxx027}
  {\path{doi:10.1093/biostatistics/kxx027}}.

\bibitem{lee2024new}
H.~Lee, C.~Chen, P.~Kochunov, E.~Hong, L.\, S.~Chen, A new multiple-mediator
  model maximally uncovering the mediation pathway: Evaluating the role of
  neuroimaging measures in age-related cognitive decline, Annals of Applied
  Statistics 18~(4) (2024) to appear.
\newblock \href {https://doi.org/10.1214/24-aoas1905}
  {\path{doi:10.1214/24-aoas1905}}.

\bibitem{dai2024integrative}
Y.~Dai, X.~Li, H.~Liu, Integrative mediation analysis of multiple pathways in
  high-dimensional data, arXiv preprintReplace with journal details \& DOI when
  published (2024).
\newblock \href {http://arxiv.org/abs/arXiv:2406.XXXXX}
  {\path{arXiv:arXiv:2406.XXXXX}}.

\bibitem{sullivan2018psychiatric}
P.~F. Sullivan, A.~Agrawal, C.~M. Bulik, O.~A. Andreassen, A.~D. B{\o}rglum,
  G.~Breen, S.~Cichon, H.~J. Edenberg, S.~V. Faraone, J.~Gelernter, et~al.,
  Psychiatric genomics: an update and an agenda, American Journal of Psychiatry
  175~(1) (2018) 15--27.

\bibitem{smith2015positive}
S.~M. Smith, T.~E. Nichols, D.~Vidaurre, et~al., A positive-negative mode of
  population covariation links brain connectivity, demographics and behavior,
  Nature Neuroscience 18~(11) (2015) 1565--1567.
\newblock \href {https://doi.org/10.1038/nn.4125} {\path{doi:10.1038/nn.4125}}.

\bibitem{gamazon2015gene}
E.~R. Gamazon, H.~E. Wheeler, K.~P. Shah, et~al., A gene-based association
  method for mapping traits using reference transcriptome data, Nature Genetics
  47~(9) (2015) 1091--1098.
\newblock \href {https://doi.org/10.1038/ng.3367} {\path{doi:10.1038/ng.3367}}.

\bibitem{zhou2020sparse}
H.~Zhou, W.~Pan, C.~Hu, Sparse pca-based mediation analysis for
  high-dimensional data, Biostatistics 21~(4) (2020) e576--e592.
\newblock \href {https://doi.org/10.1093/biostatistics/kxy065}
  {\path{doi:10.1093/biostatistics/kxy065}}.

\bibitem{baron1986moderator}
R.~M. Baron, D.~A. Kenny, The moderator--mediator variable distinction in
  social psychological research: Conceptual, strategic, and statistical
  considerations, Journal of Personality and Social Psychology 51~(6) (1986)
  1173--1182.
\newblock \href {https://doi.org/10.1037/0022-3514.51.6.1173}
  {\path{doi:10.1037/0022-3514.51.6.1173}}.

\bibitem{mackinnon2007mediation}
D.~P. MacKinnon, Introduction to Statistical Mediation Analysis, Erlbaum, New
  York, 2007.

\bibitem{imai2010identification}
K.~Imai, L.~Keele, D.~Tingley, A general approach to causal mediation analysis,
  Psychological Methods 15~(4) (2010) 309--334.
\newblock \href {https://doi.org/10.1037/a0020761}
  {\path{doi:10.1037/a0020761}}.

\bibitem{tibshirani1996regression}
R.~Tibshirani, Regression shrinkage and selection via the lasso, Journal of the
  Royal Statistical Society, Series B 58~(1) (1996) 267--288.
\newblock \href {https://doi.org/10.1111/j.2517-6161.1996.tb02080.x}
  {\path{doi:10.1111/j.2517-6161.1996.tb02080.x}}.

\bibitem{boyd2011distributed}
S.~Boyd, N.~Parikh, E.~Chu, B.~Peleato, J.~Eckstein, Distributed optimization
  and statistical learning via the alternating direction method of multipliers,
  Foundations and Trends in Machine Learning 3~(1) (2011) 1--122.
\newblock \href {https://doi.org/10.1561/2200000016}
  {\path{doi:10.1561/2200000016}}.

\bibitem{bolte2014proximal}
J.~Bolte, S.~Sabach, M.~Teboulle, Proximal alternating linearized minimization
  for nonconvex and nonsmooth problems, Mathematical Programming 146~(1--2)
  (2014) 459--494.
\newblock \href {https://doi.org/10.1007/s10107-013-0701-9}
  {\path{doi:10.1007/s10107-013-0701-9}}.

\bibitem{Zhang2016HIMA}
S.~Zhang, L.~Zhang, L.~Yin, S.~Ma, High‐dimensional mediation analysis with
  applications to causal gene identification, Biometrics 72~(4) (2016)
  1291--1302.
\newblock \href {https://doi.org/10.1111/biom.12513}
  {\path{doi:10.1111/biom.12513}}.

\bibitem{ZhaoLuo2022PathwayLasso}
Y.~Zhao, X.~Luo, Pathway lasso: Pathway estimation and selection with
  high‐dimensional mediators, Statistics and Its Interface 15~(1) (2022)
  39--50.
\newblock \href {https://doi.org/10.4310/21-SII673}
  {\path{doi:10.4310/21-SII673}}.

\bibitem{ChenEtAl2018DMs}
S.~Chen, C.~Crainiceanu, E.~L. Ogburn, B.~S. Caffo, M.~A. Lindquist,
  High‐dimensional multivariate mediation with application to neuroimaging
  data, Biostatistics 19~(2) (2018) 121--136.
\newblock \href {https://doi.org/10.1093/biostatistics/kxx023}
  {\path{doi:10.1093/biostatistics/kxx023}}.

\bibitem{ZhaoLindquistCaffo2020SPCMA}
Y.~Zhao, M.~A. Lindquist, B.~S. Caffo, Sparse principal component based
  high‐dimensional mediation analysis, Computational Statistics \& Data
  Analysis 142 (2020) 106835.
\newblock \href {https://doi.org/10.1016/j.csda.2019.106835}
  {\path{doi:10.1016/j.csda.2019.106835}}.

\bibitem{bartal2001health}
M.~Bartal, Health effects of tobacco use and exposure, Monaldi Archives for
  Chest Disease 56~(6) (2001) 545--554.
\newblock \href {https://doi.org/10.4081/monaldi.2001.549}
  {\path{doi:10.4081/monaldi.2001.549}}.

\bibitem{erzurumluoglu2020meta}
A.~M. Erzurumluoglu, M.~Liu, V.~E. Jackson, D.~R. Barnes, G.~Datta, C.~A.
  Melbourne, R.~Young, C.~Batini, P.~Surendran, T.~Jiang, et~al., Meta-analysis
  of up to 622,409 individuals identifies 40 novel smoking behaviour associated
  genetic loci, Molecular psychiatry 25~(10) (2020) 2392--2409.

\bibitem{xu2020genome}
K.~Xu, J.~W.~J. Li, J.~Yang, Genome‐wide association study of cigarettes per
  day in $>300{,}000$ european ancestry individuals, Addiction Biology 25~(5)
  (2020) e12812.
\newblock \href {https://doi.org/10.1111/adb.12812}
  {\path{doi:10.1111/adb.12812}}.

\bibitem{buchwald2021genome}
D.~P. Buchwald, B.~M. Neale, J.~Gelernter, Genome‐wide association
  meta‐analysis of smoking quantity identifies multiple novel loci, Molecular
  Psychiatry 26 (2021) 2906--2918.
\newblock \href {https://doi.org/10.1038/s41380-020-00913-3}
  {\path{doi:10.1038/s41380-020-00913-3}}.

\bibitem{kuhn2012brain}
S.~Kuhn, J.~Gallinat, Brain structure adaptations in smokers: Voxel-based
  morphometry and meta-analysis, Human Brain Mapping 33~(12) (2012) 292--308.
\newblock \href {https://doi.org/10.1002/hbm.21219}
  {\path{doi:10.1002/hbm.21219}}.

\bibitem{gray2020associations}
M.~A. Gray, G.~D. Smith, H.~M. Grant, Associations between cigarette smoking
  and structural brain mri measures in uk biobank, Addiction Biology 25~(2)
  (2020) e12770.
\newblock \href {https://doi.org/10.1111/adb.12770}
  {\path{doi:10.1111/adb.12770}}.

\bibitem{Heatherton1991}
T.~F. Heatherton, L.~T. Kozlowski, R.~C. Frecker, W.~L. Fagerström, The
  fagerström test for nicotine dependence: A revision of the fagerström
  tolerance questionnaire, British Journal of Addiction 86~(9) (1991)
  1119--1127.
\newblock \href {https://doi.org/10.1111/j.1360-0443.1991.tb01879.x}
  {\path{doi:10.1111/j.1360-0443.1991.tb01879.x}}.

\bibitem{gtex2020gtex}
G.~Consortium, The gtex consortium atlas of genetic regulatory effects across
  human tissues, Science 369~(6509) (2020) 1318--1330.

\bibitem{barbeira2019integrating}
A.~N. Barbeira, M.~Pividori, J.~Zheng, H.~E. Wheeler, D.~L. Nicolae, H.~K. Im,
  Integrating predicted transcriptome from multiple tissues improves
  association detection, PLoS genetics 15~(1) (2019) e1007889.

\bibitem{wang2024tips}
N.~Wang, Z.~Ye, T.~Ma, {TIPS}: a novel pathway-guided joint model for
  transcriptome-wide association studies, Briefings in Bioinformatics 25~(6)
  (2024) bbae587.
\newblock \href {https://doi.org/10.1093/bib/bbae587}
  {\path{doi:10.1093/bib/bbae587}}.

\bibitem{lassi2016chrna5}
G.~Lassi, A.~E. Taylor, N.~J. Timpson, P.~J. Kenny, R.~J. Mather, T.~Eisen,
  M.~R. Munaf{\`o}, The chrna5--a3--b4 gene cluster and smoking: from discovery
  to therapeutics, Trends in neurosciences 39~(12) (2016) 851--861.

\bibitem{fowler2011alpha5}
C.~D. Fowler, N.~Arends, P.~M. Kenny, The $\alpha5$ nicotinic receptor subunit
  plays a key role in nicotine reward and withdrawal, Current Biology 21~(22)
  (2011) 221--227.
\newblock \href {https://doi.org/10.1016/j.cub.2011.10.041}
  {\path{doi:10.1016/j.cub.2011.10.041}}.

\bibitem{rezvani2007nicotine}
K.~Rezvani, Y.~Teng, D.~Shim, M.~De~Biasi, Nicotine regulates multiple synaptic
  proteins by inhibiting proteasomal activity, Journal of Neuroscience 27~(39)
  (2007) 10508--10519.

\bibitem{sherva2016genetic}
R.~Sherva, A.~B. Smoller, R.~Walters, Genetic associations of the {\it
  chrna5}–{\it chrna3}–{\it chrnb4} gene cluster with smoking behaviors and
  lung cancer risk, Nature Genetics 48~(1) (2016) 59--65.
\newblock \href {https://doi.org/10.1038/ng.3452} {\path{doi:10.1038/ng.3452}}.

\bibitem{hong2009association}
L.~E. Hong, H.~Gu, Y.~Yang, T.~J. Ross, B.~J. Salmeron, B.~Buchholz, G.~K.
  Thaker, E.~A. Stein, Association of nicotine addiction and nicotine's actions
  with separate cingulate cortex functional circuits, Archives of general
  psychiatry 66~(4) (2009) 431--441.

\bibitem{ghahremani2023nicotine}
D.~G. Ghahremani, J.-B.~F. Pochon, M.~P. Diaz, R.~F. Tyndale, A.~C. Dean, E.~D.
  London, Nicotine dependence and insula subregions: functional connectivity
  and cue-induced activation, Neuropsychopharmacology 48~(6) (2023) 936--945.

\bibitem{spurden1997nicotinic}
D.~Spurden, J.~Court, S.~Lloyd, A.~Oakley, R.~Perry, C.~Pearson, R.~Pullen,
  E.~Perry, Nicotinic receptor distribution in the human thalamus:
  autoradiographical localization of [3h] nicotine and [125i]
  $\alpha$-bungarotoxin binding, Journal of Chemical Neuroanatomy 13~(2) (1997)
  105--113.

\bibitem{ke2022high}
H.~Ke, Z.~Ren, J.~Qi, S.~Chen, G.~C. Tseng, Z.~Ye, T.~Ma, High-dimension to
  high-dimension screening for detecting genome-wide epigenetic and noncoding
  rna regulators of gene expression, Bioinformatics 38~(17) (2022) 4078--4087.

\bibitem{loh2022nonlinear}
W.~W. Loh, B.~Moerkerke, T.~Loeys, S.~Vansteelandt, Nonlinear mediation
  analysis with high-dimensional mediators whose causal structure is unknown,
  Biometrics 78~(1) (2022) 46--59.

\bibitem{lee2007nonlinear}
J.~A. Lee, M.~Verleysen, Nonlinear dimensionality reduction, Springer Science
  \& Business Media, 2007.

\bibitem{imai2010general}
K.~Imai, L.~Keele, D.~Tingley, A general approach to causal mediation
  analysis., Psychological methods 15~(4) (2010) 309.

\bibitem{casey2018adolescent}
B.~J. Casey, T.~Cannonier, M.~I. Conley, A.~O. Cohen, D.~M. Barch, M.~M.
  Heitzeg, M.~E. Soules, T.~Teslovich, D.~V. Dellarco, H.~Garavan, et~al., The
  adolescent brain cognitive development (abcd) study: imaging acquisition
  across 21 sites, Developmental cognitive neuroscience 32 (2018) 43--54.

\bibitem{hong2016convergence}
M.~Hong, Z.-Q. Luo, M.~Razaviyayn, J.-S. Pang, Convergence of alternating
  direction method of multipliers for nonconvex and nonsmooth optimization,
  SIAM Journal on Optimization 26~(1) (2016) 337--364.
\newblock \href {https://doi.org/10.1137/140961791}
  {\path{doi:10.1137/140961791}}.

\end{thebibliography}

\end{document}